\newcolumntype{C}[1]{>{\centering\let\newline\\\arraybackslash\hspace{0pt}}m{#1}}
\newcommand{\dist}{\operatorname{dist}} 
\newcommand{\maj}{\operatorname{Maj}} 
\newtheorem{theorem}{Theorem}
\newtheorem{lemma}{Lemma}
\theoremstyle{definition}
\newtheorem{example}{Example}
\newtheorem{definition}{Definition}
\newtheorem{corollary}{Corollary}
\theoremstyle{definition}
\newtheorem{remark}{Remark}
\theoremstyle{definition}
\newcommand{\interior}[1]{%
  {\kern0pt#1}^{\mathrm{o}}%
}
\newcommand{\restrictedto}[1]{|_{#1}}
\newcommand\blfootnote[1]{%
  \begingroup
  \renewcommand\thefootnote{}\footnote{#1}%
  \addtocounter{footnote}{-1}%
  \endgroup
}
\begin{document}

\title{Differential Privacy for Binary Functions via Randomized Graph Colorings} 

\author{Rafael G. L. D'Oliveira\IEEEauthorrefmark{1}, Muriel Médard\IEEEauthorrefmark{1} and Parastoo Sadeghi\IEEEauthorrefmark{2}   \\ \IEEEauthorrefmark{1}RLE, Massachusetts Institute of Technology, USA\\ \IEEEauthorrefmark{2}SEIT, University of New South Wales, Canberra, Australia \\ Emails: \{rafaeld, medard\}@mit.edu, p.sadeghi@unsw.edu.au }

\maketitle

\begin{abstract} We present a framework for designing differentially private (DP) mechanisms for binary functions via a graph representation of datasets. Datasets are nodes in the graph and any two neighboring datasets are connected by an edge. The true binary function we want to approximate assigns a value (or true color) to a dataset. Randomized DP mechanisms are then equivalent to randomized colorings of the graph. A key notion we use is that of the boundary of the graph. Any two neighboring datasets assigned a different true color belong to the boundary.

Under this framework, we show that fixing the mechanism behavior at the boundary induces a unique optimal mechanism. Moreover, if the mechanism is to have a homogeneous behavior at the boundary, we present a closed expression for the optimal mechanism, which is obtained by means of a \emph{pullback} operation on the optimal mechanism of a line graph. For balanced mechanisms, not favoring one binary value over another, the optimal $(\epsilon,\delta)$-DP mechanism takes a particularly simple form, depending only on the minimum distance to the boundary, on $\epsilon$, and on $\delta$.
\end{abstract}

\blfootnote{The work of P.~Sadeghi was supported by the Australian Research Council Future Fellowship, FT190100429.}

\section{Introduction}
Since its inception, differential privacy (DP) \cite{dwork, dwork2014algorithmic} has become an important privacy-preserving tool in sharing information from datasets that contain sensitive information about individuals. A notable application of differential privacy was in the 2020 US Census privatization \cite{uscensus}, impacting hundreds of millions of people. 

The definition of differential privacy hinges upon the principle of \emph{neighboring} datasets -- those that differ in a single entry corresponding to one individual or sensitive feature. Roughly speaking, an $(\epsilon, \delta)$-DP mechanism aims to give the same randomized answer to a query from any two neighboring datasets with probabilities that are within $e^{\epsilon}$ multiplicative factor  of each other (modulo a small  additive constant $\delta$). Such a definition of DP is information-theoretic in the sense that it aims to limit the amount of information leakage about an individual in a dataset to an adversary with \emph{unbounded} computational power \cite{computationalDP}.\footnote{In contrast, computational DP \cite{computationalDP} relaxes this requirement and limits the information leakage to an adversary with finite computational power.}

The relationship between information-theoretic DP and local DP (LDP) \cite{LDP} with other notions of information-theoretic privacy have been studied. These include conditional mutual information \cite{cuff:2016} and maximal leakage \cite{MaxL2020}, which are under worst-case source distribution, as well as mutual information \cite{identifiability} and $\epsilon$-log-lift (also known as $\epsilon$-information-privacy or information density) \cite{Watchdog2019, Sadeghi2020ITW,PvsInfer2012}, which assume a given source distribution.

It can intuitively be understood that explicit DP conditions on neighboring datasets create topological privacy-preserving conditions into the fabric of the family of datasets of interest. In this paper, we propose to represent such topological $(\epsilon, \delta)$-DP conditions on discrete randomized mechanisms using graphs, where the vertices represent datasets and edges connect neighboring datasets. In this framework, a DP mechanism is a randomized coloring of the graph, subject to $(\epsilon, \delta)$-DP conditions. Crucially, we also consider utility via a \emph{true} coloring of the graph, where colors represent true values of the query function performed on a dataset. Any two neighboring datasets assigned a different true color belong to the graph boundary. To the best of our knowledge, a graph-based study of the tension between privacy and utility in the DP framework and corresponding optimal design of DP mechanisms is new.

As a first step towards a graph-based understating of this problem, we focus on binary functions. Applications include majority queries about voting or survey results, protecting participation of individuals in surveys, or simply crude quantized queries on whether a parameter of interest in a dataset is below or above a certain threshold. For a survey of applications of DP mechanisms for binary-valued functions see \cite{holahan_LP_DP}.

To illustrate, consider a case where three voters privately voted YES or NO to a sensitive matter. Considering all $2^3$ voting outcomes by three unique voters, Fig. \ref{fig: ex graphs a} shows the true majority function where blue means the majority voted YES and red means the majority voted NO. However, ignoring unique voters, these eight datasets can be compactly represented by (or collapsed on to) a \emph{line graph} comprising of four nodes, as in Fig. \ref{fig: ex graphs c}, where node $d_1$ means all three voted YES and node $d_2$ means any two people voted YES while the third voted NO - inversely, for nodes $d_3$ and $d_4$. 

From a mechanism design perspective, the line graph model for the majority function is much simpler to deal with. For $2n+1$ individuals, it reduces the complexity from $2^{2n+1}$ unique datasets to $2(n+1)$ datasets. But one might ask: is there any loss of \emph{optimality} in doing so? More broadly: is there a systematic and optimal way for \emph{importing or exporting} DP mechanisms across different families of datasets?  

\subsection{Summary of Results}
We illustrate our main results referring to Fig. \ref{fig: boundary graph}. We are interested in designing an optimal mechanism for the family of datasets represented by the graph in Fig. \ref{fig: boundary graph}(a). Here, optimal means the DP mechanism  dominates other mechanisms in terms of probability of truthful response (which we reasonably assume maximizes some utility function). 

\begin{itemize}
    \item We prove in Theorem \ref{theo: unique optimal} that if we fix the probability of giving the truthful response for each dataset in $\{h, \ell, q, n, i\}$ at the boundary, then there exists at most one optimal DP mechanism that satisfies these boundary conditions.
    \item In a \emph{boundary homogeneous} DP mechanism, only two parameters, $m_B$ and $m_R$, specify the probability of truthful response at blue and red boundary datasets,  respectively. Under this setting, we show through Definitions \ref{def:morph:general}, \ref{def:boundary:homog}, \ref{def:nB:nR:line}, \ref{def:morphto:nB:nR:line} and Theorems \ref{theo: morphism}, \ref{theo: general to line} that one can apply a color- and boundary- preserving morphism to obtain the line graph in Fig. \ref{fig: boundary graph}(b) with only two nodes in its boundary, optimally solve the $(\epsilon, \delta)$-DP mechanism over it, and pull it back to apply to Fig. \ref{fig: boundary graph}(a), while preserving optimality.
    \item In Theorem \ref{theo: optimal line mech}, we give a closed expression for the optimal $(\epsilon, \delta)$-DP mechanism for the line graph. Thus, we also obtain a closed expression for the optimal boundary homogeneous DP mechanism, via Theorem \ref{theo: general to line}.
    \item A mechanism is balanced if it is boundary homogeneous and $m_B = m_R$. The optimal balanced $(\epsilon, \delta)$-DP mechanism takes a very simple form. For any dataset $d$, the probability $P_e$ of giving the incorrect response (opposite to its true color) only depends on the shortest path to the nearest dataset of opposite color, $\Delta$, and the privacy parameters, $\epsilon$ and $\delta$:
    \begin{align*}
    P_e(d) = \max \left \{ \frac{e^\epsilon - 1 - \delta (e^{\epsilon (\Delta+1)} +e^{\epsilon \Delta} -2)}{e^{\epsilon \Delta} (e^\epsilon +1)(e^\epsilon -1)} , 0 \right \}.
\end{align*}
\end{itemize}

\section{Setting}

\begin{figure*}[!t]
\begin{subfigure}[t]{0.31\textwidth}
    \centering
    \begin{tikzpicture}[scale=1]

\draw[color=black, thick] (1,1) -- (3,1) -- (3,3) -- (1,3) -- cycle;
\draw[color=black, thick] (2,2) -- (4,2) -- (4,4) -- (2,4) -- cycle;
\draw[color=black, thick] (1,1) -- (2,2);
\draw[color=black, thick] (3,1) -- (4,2);
\draw[color=black, thick] (1,3) -- (2,4);
\draw[color=black, thick] (3,3) -- (4,4);

\draw[blue,fill=blue] (1,1) circle (.5ex);
\draw[blue,fill=blue] (3,1) circle (.5ex);
\draw[blue,fill=blue] (2,2) circle (.5ex);
\draw[blue,fill=blue] (1,3) circle (.5ex);

\draw[red,fill=red] (4,4) circle (.5ex);
\draw[red,fill=red] (2,4) circle (.5ex);
\draw[red,fill=red] (4,2) circle (.5ex);
\draw[red,fill=red] (3,3) circle (.5ex);

\node at (1,0.7) {\footnotesize{111}};
\node at (3,0.7) {\footnotesize{211}};
\node at (0.6,3) {\footnotesize{112}};
\node at (4.2,1.75) {\footnotesize{221}};
\node at (2.2,1.75) {\footnotesize{121}};
\node at (3.3,2.75) {\footnotesize{221}};
\node at (1.6,4) {\footnotesize{122}};
\node at (4.4,4) {\footnotesize{222}};

\end{tikzpicture}
    \caption{Hide vote value}
    \label{fig: ex graphs a}
\end{subfigure}
\begin{subfigure}[t]{0.31\textwidth}
    \centering
    \begin{tikzpicture}[scale=0.8]

\draw[color=black, thick] (1,1) -- (1,3) -- (3,5) -- (3,3) -- cycle;
\draw[color=black, thick] (1,3) -- (3,1) -- (5,3) -- (3,5);
\draw[color=black, thick] (3,3) -- (5,1) -- (5,3) -- (3,5);

\draw[blue,fill=blue] (1,1) circle (.6ex);
\draw[blue,fill=blue] (3,1) circle (.6ex);
\draw[blue,fill=blue] (1,3) circle (.6ex);
\draw[blue,fill=blue] (3,5) circle (.6ex);

\draw[red,fill=red] (5,1) circle (.6ex);
\draw[red,fill=red] (5,3) circle (.6ex);
\draw[red,fill=red] (3,3) circle (.6ex);

\node at (1,0.7) {\footnotesize{$\{1\}$}};
\node at (3,0.7) {\footnotesize{$\{2\}$}};
\node at (5,0.7) {\footnotesize{$\{3\}$}};
\node at (0.4,3) {\footnotesize{$\{1,2\}$}};
\node at (3,2.4) {\footnotesize{$\{1,3\}$}};
\node at (5.6,3) {\footnotesize{$\{2,3\}$}};
\node at (3,5.35) {\footnotesize{$\{1,2,3\}$}};

\end{tikzpicture}
    \caption{Hide if voted or not}
    \label{fig: ex graphs b}
\end{subfigure}
\begin{subfigure}[t]{0.31\textwidth}
    \centering
    \begin{tikzpicture}[scale=1]

\draw[color=black, thick] (1,1) -- (2,1) -- (3,1) -- (4,1);

\draw[blue,fill=blue] (1,1) circle (.5ex);
\draw[blue,fill=blue] (2,1) circle (.5ex);

\draw[red,fill=red] (3,1) circle (.5ex);
\draw[red,fill=red] (4,1) circle (.5ex);

\node at (1,0.7) {\footnotesize{$d_1$}};
\node at (2,0.7) {\footnotesize{$d_2$}};
\node at (3,0.7) {\footnotesize{$d_3$}};
\node at (4,0.7) {\footnotesize{$d_4$}};

\draw[white,fill=white] (2.5,-0.8) circle (.5ex);

\end{tikzpicture}
    \caption{Line graph for (a) or (b)}
    \label{fig: ex graphs c}
\end{subfigure}
\caption{Different types of neighborhood relations. (a) is explained in the main text. (b) shows an example where nodes represent which one of three individuals \{1\}, \{2\}, or \{3\} voted (voluntarily), whereas colors represent majority outcome assuming voters \{1\} and \{2\} always vote blue and \{3\} always votes red (ties go in favor of red). (c) is what we call a $(2,2)$-line graph in Definition \ref{def:nB:nR:line}. In Theorem \ref{theo: morphism}, we show that DP mechanisms in (c) can be transformed into DP mechanisms for both (a) and (b).}
\label{fig: ex graphs}
\end{figure*}
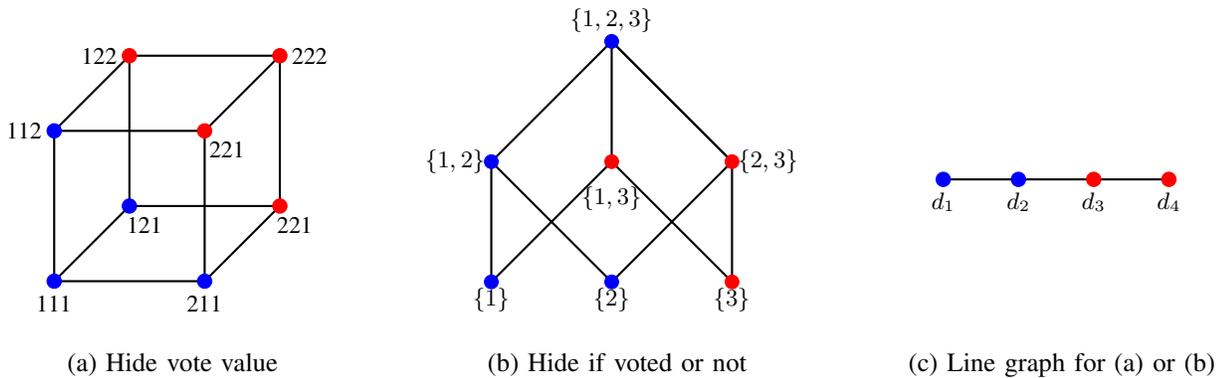 

We denote by $\mathcal{D}$ the family of datasets. We consider a symmetric neighborhood relationship in $\mathcal{D}$ where $d,d' \in \mathcal{D}$ are said to be neighbors if $d \sim d'$. We also consider a finite output space $\mathcal{V}$ which corresponds to the space over which the output of the queries lie. In this paper, we consider the case where $|\mathcal{V}|=2$ and that, without loss of generality, $\mathcal{V}=\{1,2\}$. 

A randomized mechanism, which we refer to as just a mechanism, is a random function $\mathcal{M}:\mathcal{D} \to \mathcal{V}$. We denote the set of all mechanisms of interest by $\mathfrak{M}$. In this paper, $\mathfrak{M}$ is the set of all $(\epsilon, \delta)$-DP mechanisms.

\begin{definition}\label{def: dp} Let $\epsilon, \delta \in \mathbb{R}$ be such that $\epsilon \geq 0$ and $0\leq \delta<1$. Then, a mechanism $\mathcal{M}:\mathcal{D} \to \mathcal{V}$ is $(\epsilon,\delta)$-differentially private if for any $d \sim d'$ and $S \subseteq \mathcal{V}$, we have  \[ \Pr [\mathcal{M}(d) \in \mathcal{S}] \leq e^\epsilon \Pr [\mathcal{M}(d') \in \mathcal{S}] + \delta .\] 
\end{definition}
For $|\mathcal{V}|=2$ $(\epsilon,\delta)$-differential privacy is equivalent to
\begin{align}\label{direct:equation:DP}
\Pr [\mathcal{M}(d) = v] \leq e^\epsilon \Pr [\mathcal{M}(d') = v] + \delta , \quad \forall v \in \mathcal{V}.
\end{align}

We consider a function $f:\mathcal{D} \to \mathcal{V}$ which we refer to as the true function. The goal is to approximate the true function $f$ by an $(\epsilon,\delta)$-differentially private mechanism $\mathcal{M}$. To measure the performance of the mechanism, i.e., how good the approximation is, a utility function $U:\mathfrak{M} \rightarrow \mathbb{R}$ must be defined, where $U[\mathcal M] \geq U[\mathcal M']$ means that the mechanism $\mathcal M$ performs better than $\mathcal M'$. In this work, we do not consider a specific utility function, but consider a general family of them. %\raf{Any citations for reasonable utility functions?}

\begin{definition} \label{def: reasonable utility}
A utility function $U:\mathfrak{M} \rightarrow \mathbb{R}$ is \textit{reasonable} if $\Pr[\mathcal{M}(d)=f(d)] \geq \Pr[\mathcal{M'}(d)=f(d)]$ for every $d \in \mathcal{D}$ implies $U[\mathcal{M}] \geq U[\mathcal{M}']$. When this condition holds, we say that the mechanism $\mathcal{M}$ dominates $\mathcal{M}'$.
\end{definition}
\begin{remark}
This  notion of reasonable utility is relaxed enough not to impose unnecessary conditions on the application, but strong enough to capture some of the utility functions already proposed in the DP literature. The authors in \cite{ghosh_utility} considered a more restrictive notion of utility (negative of a loss function). A loss function $\ell: \mathcal{V} \times \mathcal V \to \mathbb{R}$ was called \emph{legal} in \cite{ghosh_utility} if for every true function value $i \in \mathcal{V}$ and mechanism response $j\in \mathcal{V}$, $\ell(i,j)$ depends only on $i$ and $|i-j|$ and is non-decreasing in $|i-j|$. This loss function can be used in numerical queries to measure the mean absolute error, where $\ell(i,j) = |i-j|$ or the mean square error, where $\ell(i,j) = |i-j|^2$. For categorical queries, by setting $\ell(i,j) = 0$ for $i = j$ and $\ell(i,j) = 1$ otherwise, one can measure the average binary loss function or Hamming distortion. Finding the optimal $(\epsilon, 0)$-LDP mechanism satisfying an upper bound on the expected Hamming distortion was studied in \cite{hamming_DP}. Indeed, for a given true function $f$, simultaneously maximizing the probability of truthful response across all datasets minimizes the expected Hamming distortion function: 
\[L[\mathcal{M}]\triangleq \sum_{d\in \mathcal{D}}\Pr(d)(1-\Pr(\mathcal{M}(d) = f(d)))\] regardless of the distribution on datasets $p(d)$. Therefore $U[\mathcal{M}] = 1-L(\mathcal{M})$ is a reasonable utility function.
\end{remark}

The notion of domination in Definition \ref{def: reasonable utility} induces a partial order on the set $\mathfrak{M}$ of all mechanisms. If a mechanism $\mathcal M$ dominates another $\mathcal M'$ then the first one outperforms the second for any reasonable utility function. It is not always the case that two mechanisms can be compared, even when restricted to a reasonable utility. We give an example below.

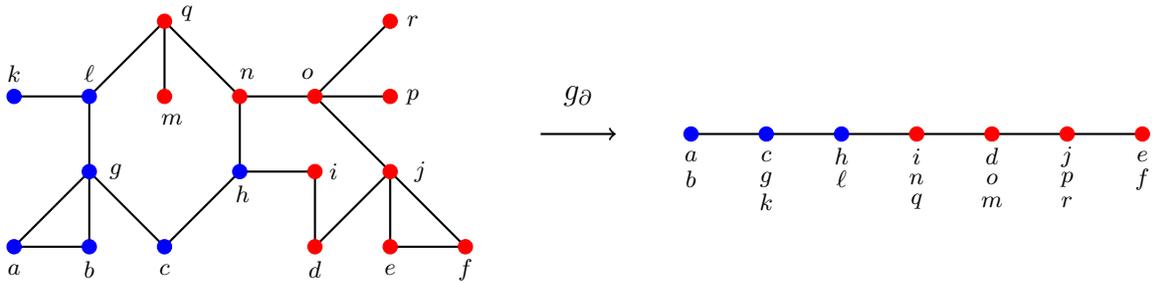
\begin{figure*}[!t]
    \centering
    \begin{tikzpicture}[scale=1]

\draw[color=black, thick] (1,1) -- (2,1) -- (2,2) -- cycle;
\draw[color=black, thick] (2,2) -- (3,1) -- (4,2) -- (4,3) -- (3,4) -- (2,3) -- cycle;
\draw[color=black, thick] (2,3) -- (1,3);
\draw[color=black, thick] (3,4) -- (3,3);
\draw[color=black, thick] (4,3) -- (5,3) -- (6,2) -- (7,1) -- (6,1) -- (6,2) -- (5,1) -- (5,2) -- (4,2) -- cycle;
\draw[color=black, thick] (5,3) -- (6,3);
\draw[color=black, thick] (5,3) -- (6,4);

\draw[blue,fill=blue] (1,1) circle (.5ex);
\node at (1,0.7) {{\footnotesize $a$}};
\draw[blue,fill=blue] (2,1) circle (.5ex);
\node at (2,0.7) {{\footnotesize $b$}};
\draw[blue,fill=blue] (3,1) circle (.5ex);
\node at (3,0.7) {{\footnotesize $c$}};
\draw[blue,fill=blue] (2,2) circle (.5ex);
\node at (2.35,2) {{\footnotesize $g$}};
\draw[blue,fill=blue] (4,2) circle (.5ex);
\node at (4.04,1.7) {{\footnotesize $h$}};
\draw[blue,fill=blue] (1,3) circle (.5ex);
\node at (1,3.3) {{\footnotesize $k$}};
\draw[blue,fill=blue] (2,3) circle (.5ex);
\node at (2,3.3) {{\footnotesize $\ell$}};

\draw[red,fill=red] (5,1) circle (.5ex);
\node at (5,0.7) {{\footnotesize $d$}};
\draw[red,fill=red] (6,1) circle (.5ex);
\node at (6,0.7) {{\footnotesize $e$}};
\draw[red,fill=red] (7,1) circle (.5ex);
\node at (7,0.7) {{\footnotesize $f$}};
\draw[red,fill=red] (5,2) circle (.5ex);
\node at (5.25,2) {{\footnotesize $i$}};
\draw[red,fill=red] (6,2) circle (.5ex);
\node at (6.4,2) {{\footnotesize $j$}};
\draw[red,fill=red] (3,3) circle (.5ex);
\node at (3.1,2.7) {{\footnotesize $m$}};
\draw[red,fill=red] (4,3) circle (.5ex);
\node at (4.1,3.3) {{\footnotesize $n$}};
\draw[red,fill=red] (5,3) circle (.5ex);
\node at (4.9,3.3) {{\footnotesize $o$}};
\draw[red,fill=red] (6,3) circle (.5ex);
\node at (6.3,3) {{\footnotesize $p$}};
\draw[red,fill=red] (3,4) circle (.5ex);
\node at (3.3,4.1) {{\footnotesize $q$}};
\draw[red,fill=red] (6,4) circle (.5ex);
\node at (6.3,4) {{\footnotesize $r$}};

\draw[->, color=black, thick] (8,2.5) -- (9,2.5);
% \node at (8.5,3) {{\small $\overset{\partial}{\sim}$}};
\node at (8.5,3) {$g_{\partial}$};

\draw[color=black, thick] (10,2.5) -- (16,2.5);

\draw[blue,fill=blue] (10,2.5) circle (.5ex);
\node at (10,2.2) {{\footnotesize $a$}};
\node at (10,1.9) {{\footnotesize $b$}};
\draw[blue,fill=blue] (11,2.5) circle (.5ex);
\node at (11,2.2) {{\footnotesize $c$}};
\node at (11,1.9) {{\footnotesize $g$}};
\node at (11,1.6) {{\footnotesize $k$}};
\draw[blue,fill=blue] (12,2.5) circle (.5ex);
\node at (12,2.2) {{\footnotesize $h$}};
\node at (12,1.9) {{\footnotesize $\ell$}};

\draw[red,fill=red] (13,2.5) circle (.5ex);
\node at (13,2.2) {{\footnotesize $i$}};
\node at (13,1.9) {{\footnotesize $n$}};
\node at (13,1.6) {{\footnotesize $q$}};
\draw[red,fill=red] (14,2.5) circle (.5ex);
\node at (14,2.2) {{\footnotesize $d$}};
\node at (14,1.9) {{\footnotesize $o$}};
\node at (14,1.6) {{\footnotesize $m$}};
\draw[red,fill=red] (15,2.5) circle (.5ex);
\node at (15,2.2) {{\footnotesize $j$}};
\node at (15,1.9) {{\footnotesize $p$}};
\node at (15,1.6) {{\footnotesize $r$}};
\draw[red,fill=red] (16,2.5) circle (.5ex);
\node at (16,2.2) {{\footnotesize $e$}};
\node at (16,1.9) {{\footnotesize $f$}};

\end{tikzpicture}
    \caption{A graph morphism, which preserves neighboring relations, minimum distance (shortest path) to the boundary, and color.}
    \label{fig: boundary graph}
\end{figure*} 

\begin{example}
Consider the dataset $\mathcal{D}=\{1,2\}$ where $1 \sim 2$ and the true function $f:\mathcal{D} \rightarrow \mathcal{V}$ is such that $f(1) = 1$ and $f(2)=2$. Let $\mathcal{M}_1$ and $\mathcal{M}_2$ be the $(\log(2),0.1)$-DP mechanisms\footnote{In this paper, by $\log$ we mean the natural logarithm.} defined such that $\Pr[\mathcal{M}_1(1)=1] = 0.58$, $\Pr[\mathcal{M}_1(2)=2] = 0.76$, $\Pr[\mathcal{M}_2(1)=1] = 0.64$, and $\Pr[\mathcal{M}_2(2)=2] = 0.73$. Then, neither mechanism dominates the other. The reason for this is that there are reasonable utility functions which, for a mechanism $\mathcal{M} \in \mathfrak{M}$ might value the output of $\Pr[\mathcal{M}(1)=1]$ more than $\Pr[\mathcal{M}(2)=2]$, or vice-versa. Extreme cases of this are the reasonable utility functions $U[\mathcal{M}] = \Pr[\mathcal{M}(1)=1]$ and $U'[\mathcal{M}] = \Pr[\mathcal{M}(2)=2]$, both disagreeing on which of $\mathcal{M}_1$ or $\mathcal{M}_2$ is better.
\end{example}

% In Theorem \ref{theo: unique optimal} we characterize the parameters on which maximal mechanisms depend and use them later on, in Theorem \ref{theo: general graph}, to obtain closed forms for some of these mechanisms. In order to do this, we study the problem form a graph-theoretic perspective.

\section{Differential Privacy as Randomized Graph Colorings}

In this section, we interpret differential privacy as a randomized graph coloring problem. The vertices of the graph are the datasets $d \in \mathcal{D}$. The edges of the graph are the neighboring relation on the datasets, i.e. two vertices $d,d' \in \mathcal{D}$ have an edge between them if $d \sim d'$. The graph is then a tuple $(\mathcal{D},\sim)$, which we often identify with the set $\mathcal{D}$ itself. 

The following transformation allows us to transport differentially private mechanisms from one setting to another.

\begin{definition}\label{def:morph:general}
A morphism from a family of datasets $\mathcal{D}_1$ to another family $\mathcal{D}_2$ is a function $g: \mathcal{D}_1 \rightarrow \mathcal{D}_2$ such that ${d\overset{1}{\sim}d'}$ implies in either $g(d)\overset{2}{\sim}g(d')$ or $g(d)=g(d')$, for every $d,d' \in \mathcal{D}_1$. 
\end{definition}

This notion is weaker than the classic graph homomorphism, which maps adjacent vertices to adjacent vertices, i.e. every graph homomorphism is a morphism, but not every morphism is a graph homomorphism. For example, the mapping from a graph with at least one edge to the graph with a single vertex is a morphism, but cannot be a graph homomorphism.

Morphisms allow us to transport mechanisms from the codomain to the domain via a pullback operation.

\begin{theorem} \label{theo: morphism}
Let $g: \mathcal{D}_1 \rightarrow \mathcal{D}_2$ be a morphism between two families of datasets and $\mathcal{M}_2 :\mathcal{D}_2 \rightarrow \mathcal{V}$ be an $(\epsilon,\delta)$-DP mechanism on $\mathcal{D}_2$. Then, the mechanism $\mathcal{M}_1:\mathcal{D}_1 \rightarrow \mathcal{V}$ given by the pullback operation $\mathcal{M}_1 = \mathcal{M}_2 \circ g$ is $(\epsilon,\delta)$-DP on $\mathcal{D}_1$.
\end{theorem}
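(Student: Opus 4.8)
The plan is to unwind the definition of $(\epsilon,\delta)$-DP for $\mathcal{M}_1$ directly, using only the neighboring relation in $\mathcal{D}_1$, the morphism property of $g$, and the known DP guarantee of $\mathcal{M}_2$ in $\mathcal{D}_2$. Since $|\mathcal{V}|=2$, it suffices by \eqref{direct:equation:DP} to check that for every pair $d \overset{1}{\sim} d'$ in $\mathcal{D}_1$ and every $v \in \mathcal{V}$ we have $\Pr[\mathcal{M}_1(d)=v] \le e^\epsilon \Pr[\mathcal{M}_1(d')=v] + \delta$.

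First I would fix such a pair $d \overset{1}{\sim} d'$ and observe that, by definition of the pullback, $\Pr[\mathcal{M}_1(d)=v] = \Pr[\mathcal{M}_2(g(d))=v]$ and likewise for $d'$. Then I would split into the two cases allowed by Definition \ref{def:morph:general}. In the first case, $g(d) \overset{2}{\sim} g(d')$; then the desired inequality is exactly the $(\epsilon,\delta)$-DP condition for $\mathcal{M}_2$ applied to the neighboring pair $g(d), g(d')$ in $\mathcal{D}_2$, so nothing more is needed. In the second case, $g(d) = g(d')$; then $\Pr[\mathcal{M}_2(g(d))=v] = \Pr[\mathcal{M}_2(g(d'))=v]$, and since $\epsilon \ge 0$ and $\delta \ge 0$ we trivially have $\Pr[\mathcal{M}_2(g(d))=v] \le e^\epsilon \Pr[\mathcal{M}_2(g(d'))=v] + \delta$ (indeed $e^\epsilon \ge 1$ and the $\delta$ term only helps). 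Combining the two cases gives the bound for the fixed pair, and since the pair was arbitrary, $\mathcal{M}_1$ is $(\epsilon,\delta)$-DP on $\mathcal{D}_1$.

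There is no real obstacle here; the argument is a routine case analysis. The only point that deserves a word of care is the degenerate case $g(d)=g(d')$, which is precisely why a morphism — rather than a graph homomorphism — is the right notion: collapsing an edge to a single vertex cannot hurt the privacy guarantee because the two pullback distributions become literally identical, and $e^\epsilon \ge 1$ absorbs the comparison. One should also note that $\mathcal{M}_1$ is indeed a well-defined randomized map $\mathcal{D}_1 \to \mathcal{V}$, since $\mathcal{M}_2 \circ g$ composes a (random) function with a (deterministic) one in the obvious way; I would state this in one line and then conclude.
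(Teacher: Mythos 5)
Your proof is correct and follows essentially the same route as the paper's: fix a neighboring pair in $\mathcal{D}_1$, rewrite via the pullback, and split on whether $g(d)\overset{2}{\sim}g(d')$ (apply the DP condition for $\mathcal{M}_2$) or $g(d)=g(d')$ (use $e^\epsilon\geq 1$ and $\delta\geq 0$). The paper compresses the two cases into a single line, but the argument is identical.
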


\begin{proof}
Let $d,d' \in \mathcal{D}_1$ be such that $d \overset{1}{\sim} d'$. Then, 
\begin{align*}
    \Pr[\mathcal{M}_1 (d) = v] &= \Pr[\mathcal{M}_2 ( g(d) ) = v] \\
    &\leq e^\epsilon \Pr[\mathcal{M}_2 ( g(d') ) = v] + \delta \\
    &= e^\epsilon \Pr[\mathcal{M}_1 (d') = v] + \delta ,
\end{align*}
where the inequality follows from either $g(d)\overset{2}{\sim}g(d')$ or $g(d)=g(d')$.
\end{proof}

In Fig.~\ref{fig: boundary graph}, we show a morphism $g_\partial$ between a general graph and a line graph. In Theorem \ref{theo: general to line}, we use this same kind of morphism to obtain optimal $(\epsilon,\delta)$-DP mechanisms for a general class of graphs by pulling them back from optimal $(\epsilon,\delta)$-DP mechanisms on line graphs.

We now incorporate the true function we want to approximate into the graph. The true function $f:\mathcal{D} \to \mathcal{V}$ is equivalent to a coloring of the graph $\mathcal{D}$. We call the triple $(\mathcal{D},\sim,f)$ a colored graph, and often identify it with $\mathcal{D}$. We call a morphism $g: \mathcal{D}_1 \rightarrow \mathcal{D}_2$ such that $f_1 = f_2 \circ g$, a color preserving morphism. An $(\epsilon,\delta)$-DP mechanism is then a randomized coloring of the graph satisfying constraints related to the edges of the graph.

\begin{example} \label{ex: colored graphs}
Consider the dataset $\mathcal{D}=\{1,2\}^3$ where vertices are neighbors if they only differ in one entry, and the true function $\maj: \mathcal{D} \rightarrow \{1,2\}$ given by the majority function. If we assign colors to values such that $\maj(d)=1$ is blue and $\maj(d)=2$ is red, we obtain the graph in Fig.~\ref{fig: ex graphs a}. The function from Fig.~\ref{fig: ex graphs a} to Fig.~\ref{fig: ex graphs c} such that $111 \mapsto d_1$, $\{112,121,211\} \mapsto d_2$, $\{122,211,221\} \mapsto d_3$, and $222 \mapsto d_4$ is a color preserving morphism.
\end{example}

We define the following topological notions on our graphs.

\begin{definition}
The blue set is $B = \{ d \in \mathcal{D}:f(d)=1 \}$, corresponding to the color blue in our figures. The interior of $B$ is the set $\interior{B} = \{ d \in B : d \sim d' \Rightarrow d' \in B \}$ and its boundary is the set $\partial B = B - \interior{B}$. Replacing $B$ by $R$ above, we obtain the analogous red versions of the definitions. When referring to a single mechanism we denote the probabilities on the output by $B_d = \Pr[\mathcal{M}(d)=1]$ and $R_d = \Pr[\mathcal{M}(d)=2]$.

The distance between two points $d,d' \in \mathcal{D}$ is the number of edges in a shortest path connecting them, which we denote by $\dist (d,d')$. The distance from a point $d \in \mathcal{D}$ to a subset $A \subseteq \mathcal{D}$ is defined as $\dist (d,A) = \min_{d' \in A} \dist (d,d')$.
\end{definition}

Thus, if we consider the colored graph on the left of Fig.~\ref{fig: boundary graph}, the blue set is given by $B= \{a,b,c,g,h,k,\ell\}$, its interior by $\interior{B}=\{a,b,c,g,k\}$, and its boundary by $\partial B = \{h,\ell\}$.

\begin{remark} \label{rem: duality}
In this paper, we characterize mechanisms by how they behave on the blue set. Their behavior on the red set can then be derived by using analogous arguments. In general, our statements for the blue set imply in a dual version of them by replacing $B$ with $R$ and vice-versa. The dual of a colored graph is the graph with colors red and blue swapped.
\end{remark}

\section{Optimal Mechanisms}

In this section, we focus on finding optimal $(\epsilon,\delta)$-DP mechanisms for binary values. In Theorem \ref{theo: unique optimal}, we characterize the optimal mechanism in terms of its values at the boundary. Later, in Theorem \ref{theo: general graph}, we present a closed form for the optimal mechanism when the values at the boundary satisfy a homogeneity condition.\footnote{In what follows and in order to avoid cumbersome notation with $\max$ and $\min$ functions, every time a probability is less than zero we interpret it to be zero, and every time it is more than one we interpret it to be one.}

\begin{theorem} \label{theo: unique optimal}
Let $(\mathcal{D},\sim,f)$ be a colored graph and $m_d \in [0,1]$ be a fixed value for every $d \in \partial B$. Then, there exists at most one $(\epsilon,\delta)$-DP mechanism $\mathcal{M}:\mathcal{D} \rightarrow \mathcal{V}$ such that $B_d=m_d$, for every $d \in \partial B$.
\end{theorem}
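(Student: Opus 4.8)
The plan is to reformulate the problem for binary output and then run a lattice/``join'' argument. Since $|\mathcal V|=2$, a mechanism is determined by its blue profile $(B_d)_{d\in\mathcal D}$ (with $R_d=1-B_d$), and by \eqref{direct:equation:DP} the $(\epsilon,\delta)$-DP constraint is equivalent to the four linear inequalities $B_d\le e^\epsilon B_{d'}+\delta$, $B_{d'}\le e^\epsilon B_d+\delta$, $1-B_d\le e^\epsilon(1-B_{d'})+\delta$, $1-B_{d'}\le e^\epsilon(1-B_d)+\delta$ holding for every edge $d\sim d'$. I read ``optimal'' (cf.\ Definition~\ref{def: reasonable utility}) as: not dominated by any other $(\epsilon,\delta)$-DP mechanism with the same boundary values, equivalently $B_d$ is simultaneously as large as possible on $B$ and as small as possible on $R$ subject to those inequalities. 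The theorem then says such a maximal mechanism, when it exists, is unique.

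The core step is a join construction. Given two $(\epsilon,\delta)$-DP mechanisms $\mathcal M^1,\mathcal M^2$ with $B^1_d=B^2_d=m_d$ for all $d\in\partial B$, define $\mathcal M^{\vee}$ by $B^{\vee}_d=\max\{B^1_d,B^2_d\}$ for $d\in B$ and $B^{\vee}_d=\min\{B^1_d,B^2_d\}$ for $d\in R$. Then $\Pr[\mathcal M^{\vee}(d)=f(d)]=\max_{i}\Pr[\mathcal M^i(d)=f(d)]$ for every $d$, so $\mathcal M^{\vee}$ dominates both $\mathcal M^1$ and $\mathcal M^2$ and still agrees with $(m_d)$ on $\partial B$; what remains is to check that $\mathcal M^{\vee}$ is $(\epsilon,\delta)$-DP, which I would do edge by edge. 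If $d\sim d'$ with $d,d'\in B$: for a $v=1$ inequality take the index $i$ realizing the max at the first endpoint and use $B^{\vee}_d=B^i_d\le e^\epsilon B^i_{d'}+\delta\le e^\epsilon B^{\vee}_{d'}+\delta$; for a $v=2$ inequality take the index $j$ realizing the max at the second endpoint and use $1-B^{\vee}_d\le 1-B^j_d\le e^\epsilon(1-B^j_{d'})+\delta=e^\epsilon(1-B^{\vee}_{d'})+\delta$. The case $d,d'\in R$ is the dual of this via Remark~\ref{rem: duality} ($\min$ and $\max$ swap, and DP is self-dual).

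The crossing case $d\in B$, $d'\in R$ is where the boundary hypothesis is indispensable, and I expect it to be the main conceptual obstacle. Since $d$ has a neighbour in $R$ it lies in $\partial B$, hence $B^1_d=B^2_d=m_d$ and so $B^{\vee}_d=m_d$; the four inequalities on the edge $d\sim d'$ then follow because both $\mathcal M^1$ and $\mathcal M^2$ satisfy them with the \emph{same} value $m_d$ at $d$, so $\min\{B^1_{d'},B^2_{d'}\}$ (and the resulting $\max\{1-B^1_{d'},1-B^2_{d'}\}$) lands on the favourable side of each inequality. If one only knew $B^1_d,B^2_d\in\partial B$ without their being equal, this step genuinely fails — which is precisely why the $m_d$ must be pinned down for uniqueness to hold.

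To conclude, suppose $\mathcal M^1$ and $\mathcal M^2$ are both optimal mechanisms with $B_d=m_d$ on $\partial B$. The join $\mathcal M^{\vee}$ is again an $(\epsilon,\delta)$-DP mechanism with $B_d=m_d$ on $\partial B$ that dominates each of them, so maximality of $\mathcal M^1$ (resp.\ $\mathcal M^2$) forces $\mathcal M^{\vee}=\mathcal M^1$ (resp.\ $=\mathcal M^2$). Comparing profiles — equal on $B$ directly, and equal on $R$ since $R^1_d=R^2_d$ there — together with $|\mathcal V|=2$ gives $\mathcal M^1=\mathcal M^2$, so there is at most one such mechanism.
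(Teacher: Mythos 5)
Your proof is correct, and it takes a genuinely different route from the paper's. The paper argues that along any edge inside $B$ the binding DP constraints are upper bounds on each endpoint in terms of the other, so that (on each connected component of $B$, and then of $R$, once the boundary is pinned) all interior probabilities of truthful response can be ``maximized together''; the unique optimum is this simultaneous maximizer. You instead show that the feasible set of blue profiles with the prescribed boundary values is closed under the join ($\max$ on $B$, $\min$ on $R$) and that the join dominates both of its arguments, so any maximal element is in fact the maximum and hence unique. Your edge-by-edge verification is sound in all three cases (both endpoints in $B$, both in $R$, and crossing), and you correctly isolate the crossing case as the one place where the hypothesis $B^1_d=B^2_d=m_d$ on $\partial B$ is indispensable: without it, the inequality $m_d\le e^\epsilon\min_j B^j_{d'}+\delta$ obtained by taking the minimizing index at $d'$ would not follow. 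Your argument buys two things over the paper's: it dispenses with the connectivity reduction entirely, and it establishes the slightly stronger fact that there is at most one \emph{maximal} mechanism in the domination order, rather than only uniqueness of a maximum whose existence the paper's ``maximized together'' step asserts somewhat informally. One small remark: the theorem as literally worded omits the word ``optimal''; your reading of it as uniqueness of the undominated $(\epsilon,\delta)$-DP mechanism with the given boundary values is the right one and matches the paper's intent (compare the Summary of Results and the final sentence of the paper's proof).
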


\begin{proof}
We assume the subgraphs $B$ and $R$ are connected. If not, the following argument will hold for each connected component of $B$ and $R$. We also assume that there exists at least one mechanism which satisfies the $(\epsilon,\delta)$-DP constraints, otherwise our result trivially follows since their are no maximal $(\epsilon,\delta)$-DP mechanisms. 

Let $d,d' \in B$ be such that $d \sim d'$. Then, the $(\epsilon,\delta)$-DP conditions are given by
\begin{align} \label{eq: unique optimal dp1}
    B_d \leq e^\epsilon B_{d'} + \delta,
\end{align}
\begin{align}\label{eq: unique optimal dp2}
    B_{d'} \leq e^\epsilon B_{d} + \delta,
\end{align}
\begin{align}\label{eq: unique optimal dp3}
    1- B_d \leq e^\epsilon (1 -B_{d'}) + \delta,
\end{align}
\begin{align}\label{eq: unique optimal dp4}
    1 - B_{d'} \leq e^\epsilon (1-B_{d}) + \delta.
\end{align}
Assume, without loss of generality, that $B_d < B_{d'}$. Then, \eqref{eq: unique optimal dp1} and \eqref{eq: unique optimal dp4} are trivially satisfied. The remaining bounds, \eqref{eq: unique optimal dp2} and \eqref{eq: unique optimal dp3}, are both upper bounds on $B_{d'}$. Indeed, \eqref{eq: unique optimal dp3} is equivalent to $B_{d'}\leq \frac{e^\epsilon B_d+e^\epsilon +\delta -1}{e^\epsilon}$. Thus, $B_d$ and $B_{d'}$ are maximized together. Since $B$ is connected, this implies that all the $B_d$, for $d \in \interior{B}$, are maximized together. Define $\mathcal{M}$ to be the mechanism which maximizes all the $B_d$ simultaneously, for $d \in \interior{B}$, subject to the constraint that $\Pr[\mathcal{M}(d)=1]=m_d$, for every $d \in \partial B$.

We now consider the datasets $d \in R$. We note that, since the values at the border $d \in \partial B$ are already set, the maximization of the points in $\interior{B}$ does not affect the constraints on $R_d = \Pr[\mathcal{M}(d)=2]$. Thus, an argument analogous to the one above holds for the set $R$, i.e., all the $R_d$, for $d \in R$, can be maximized together. Thus, as above, we define $\mathcal{M}$ to be the mechanism which maximizes all the $R_d$ simultaneously, for $d \in R$. The mechanism $\mathcal{M}$ is then optimal.

\end{proof}

Thus, for every fixed values of $B_h$ and $B_\ell$ in the colored graph in the left of Fig.~\ref{fig: boundary graph}, there is either no $(\epsilon,\delta)$-DP mechanism or there is a unique maximal one. Moreover, the optimal mechanism can be found by simultaneously maximizing all the values in $B_x$ and $R_y$ for $x \in \interior{B}$ and $y \in R$. For example, if $B_h=0.6$ and $B_\ell = 0.7$, then the optimal $(\log(2),0.1)$-DP mechanism is such that $B_a = B_b = 1$, $B_g = B_k = 0.9$, $B_c=0.85$, $R_q = 0.7$, $R_i = R_n = 0.75$, $R_m=0.9$, $R_d=R_o = 0.925$, and $R_e = R_f = R_j = R_p = R_r = 1$. This can be checked by direct calculation of \eqref{direct:equation:DP} for all $d\sim d'$, showing that the $(\log(2),0.1)$-DP constraints are tightly satisfied. Another direct calculation shows that there is, however, no $(\log(1.1),0)$-DP mechanism for the same boundary conditions.

When the mechanism satisfies a homogeneity condition, we are able to find a closed expression, in Theorem \ref{theo: general graph}, for the optimal $(\epsilon,\delta)$-DP mechanism. This condition, we call boundary homogeneity, imposes the same probability of giving the truthful response at each same-color dataset of the boundary.

\begin{definition}\label{def:boundary:homog}
A mechanism $\mathcal{M}:\mathcal{D}\rightarrow \mathcal{V}$ is boundary homogeneous if, for every $d,d' \in \partial B$, it holds that $B_d = B_{d'}$. 
\end{definition}

Thus, a mechanism is boundary homogeneous if it acts the same across the boundary. For the voting example shown in Fig.~\ref{fig: ex graphs a}, a boundary homogeneous mechanism is agnostic to uniqueness of individuals, treating datasets $112$, $121$ and $211$ the same. In Theorem \ref{theo: general to line}, we show that the optimal boundary homogeneous $(\epsilon,\delta)$-DP mechanism of any colored graph can be obtained via a pullback of the optimal mechanism on a particular line graph.

\begin{definition}\label{def:nB:nR:line}
Let $n_B,n_R \in \mathbb{N}$. The $(n_B,n_R)$-line is the colored graph $(\mathcal{D},\sim,f)$ with datasets $\mathcal{D} = [1,n_B+n_R]$, neighboring relation $i \sim j$ if $|i-j|=1$, and true function such that $f([1,n_b])=1$ and $f([n_b+1,n_b+n_R])=2$.
\end{definition}

Examples include the $(2,2)$-line in Fig.~\ref{fig: ex graphs c}, the $(3,4)$-line in the right of Fig.~\ref{fig: boundary graph}, and the $(n_B,n_R)$-line in Fig.~\ref{fig: line tau}. We are particularly interested in the following $(n_B,n_R)$-line.

\begin{definition}\label{def:morphto:nB:nR:line}
Let $(\mathcal{D},\sim,f_1)$ be a colored graph and set $n_B = \max_{d \in B} (d,\partial B) +1$ and $n_R = \max_{d \in R} (d,\partial R)+1$. Then, the boundary graph of $\mathcal{D}$ is the $(n_B,n_R)$-line denoted by $(\mathcal{D}_\partial, \overset{\partial}{\sim},f_\partial)$. The boundary morphism is the color-preserving morphism $g_\partial : \mathcal{D} \rightarrow \mathcal{D}_\partial$ which maps $d \in B$ to $g_\partial(d) = n_B - \dist (d,\partial B)$ and $d \in R$ to $g_\partial(d) = n_B + 1 + \dist (d,\partial R)$.
\end{definition}

Fig.~\ref{fig: boundary graph} shows a colored graph on the left and its boundary graph on the right, with the explicit boundary morphism. Both colored graphs in Figs.~\ref{fig: ex graphs a} and \ref{fig: ex graphs b} have the $(2,2)$-line in Fig.~\ref{fig: ex graphs c} as their boundary graph. The morphism in Example \ref{ex: colored graphs} is a boundary morphism.

Our next result shows that the optimal boundary homogeneous $(\epsilon,\delta)$-DP mechanism of any colored graph can be obtained via a pullback of the optimal mechanism on its boundary graph.

\begin{theorem} \label{theo: general to line}
Let $(\mathcal{D}, \sim, f)$  be a colored graph and denote by $\mathcal{M}_\partial: \mathcal{D}_\partial \rightarrow \mathcal{V}$ the optimal $(\epsilon, \delta)$-DP mechanism on its boundary graph. Then, the pullback $\mathcal{M} = \mathcal{M}_\partial \circ g_\partial$ is the optimal boundary homogeneous $(\epsilon, \delta)$-DP mechanism on $\mathcal{D}$.
\end{theorem}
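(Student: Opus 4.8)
The plan is to show three things: $\mathcal{M}=\mathcal{M}_\partial\circ g_\partial$ is $(\epsilon,\delta)$-DP, it is boundary homogeneous, and it dominates every boundary homogeneous $(\epsilon,\delta)$-DP mechanism on $\mathcal{D}$. The first two are quick. To see that $g_\partial$ is a color-preserving morphism: if $d\sim d'$ with $d,d'\in B$, then $|\dist(d,\partial B)-\dist(d',\partial B)|\le 1$ since distance to a fixed set is $1$-Lipschitz along edges, so $g_\partial(d)$ and $g_\partial(d')$ are equal or adjacent on the line; the same holds for $d,d'\in R$; and if $d\in B$, $d'\in R$ then $d\in\partial B$ and $d'\in\partial R$, so $g_\partial(d)=n_B$ is adjacent to $g_\partial(d')=n_B+1$. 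Color preservation and the fact that $g_\partial$ takes values in $[1,n_B+n_R]$ follow from $\dist(d,\partial B)\le n_B-1$ on $B$ (and the dual bound on $R$), which is how $n_B,n_R$ were chosen in Definition~\ref{def:morphto:nB:nR:line}. Theorem~\ref{theo: morphism} then gives that $\mathcal{M}$ is $(\epsilon,\delta)$-DP, and boundary homogeneity is immediate because $g_\partial$ sends all of $\partial B$ to $n_B$ and all of $\partial R$ to $n_B+1$, so $B_d=\Pr[\mathcal{M}_\partial(n_B)=1]$ is constant on $\partial B$ and dually on $\partial R$.

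For optimality, let $\mathcal{M}'$ be an arbitrary boundary homogeneous $(\epsilon,\delta)$-DP mechanism with common boundary values $m'_B$ on $\partial B$ and $m'_R$ on $\partial R$. By Theorem~\ref{theo: unique optimal} I may replace $\mathcal{M}'$ by the unique mechanism that maximizes all interior-blue and all red probabilities subject to these fixed boundary values; this dominates $\mathcal{M}'$ and remains boundary homogeneous, so it is enough to dominate it. The crucial claim is that this optimal extension assigns to each $d\in B$ a value $\beta(\dist(d,\partial B))$ depending only on the distance to $\partial B$ (and dually on $R$), where $\beta(k)$ is the value obtained by iterating the DP inequalities $k$ times starting from $m'_B$. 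One half is an upper bound: chaining \eqref{eq: unique optimal dp2} and \eqref{eq: unique optimal dp3} along a shortest path from $d$ to $\partial B$ bounds $B_d$ by $\beta(\dist(d,\partial B))$. The other half is that the assignment $d\mapsto\beta(\dist(d,\partial B))$ is itself $(\epsilon,\delta)$-DP on $\mathcal{D}$: for adjacent vertices at distances differing by one the four DP inequalities hold---with equality in the direction toward $\partial B$ by construction, and automatically in the other direction since $\beta$ is nondecreasing and $e^\epsilon\ge1$---for adjacent vertices at equal distance they collapse to inequalities valid for all probabilities in $[0,1]$, and an edge between $\partial B$ and $\partial R$ imposes only the relation between $m'_B$ and $m'_R$ already guaranteed by $\mathcal{M}'$ being DP. Running the identical computation on $\mathcal{D}_\partial$ shows that its optimal mechanism for boundary values $m'_B,m'_R$ assigns $\beta(n_B-k)$ to vertex $k\le n_B$; since $g_\partial(d)=n_B-\dist(d,\partial B)$ on $B$, the optimal extension of $\mathcal{M}'$ is then exactly the pullback along $g_\partial$ of that line mechanism. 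Thus pullback along $g_\partial$ is a domination-preserving correspondence between the boundary homogeneous mechanisms on $\mathcal{D}$ that are optimal for their boundary values and the line mechanisms that are optimal for theirs; since every mechanism on either side is dominated by its own such optimal extension, the overall optimal line mechanism $\mathcal{M}_\partial$ lies in this correspondence, and its pullback $\mathcal{M}$ therefore dominates every boundary homogeneous $(\epsilon,\delta)$-DP mechanism on $\mathcal{D}$.

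The step I expect to be the main obstacle is the achievability half of the crucial claim: showing the distance-profile assignment is $(\epsilon,\delta)$-DP on an \emph{arbitrary} colored graph and not just on a line. The three places where this could fail---adjacent vertices at equal distance from the boundary, edges between $\partial B$ and $\partial R$, and a vertex all of whose neighbors are at least as close to the boundary as it is---are all handled by monotonicity of $\beta$, so establishing that monotonicity (equivalently, that the optimal interior probabilities increase away from the boundary) is what I would be most careful with. A secondary point is to confirm that maximizing as in Theorem~\ref{theo: unique optimal} from homogeneous boundary values produces homogeneous values on the opposite boundary as well, so that the reduction above stays inside the class of boundary homogeneous mechanisms.
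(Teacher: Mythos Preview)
Your argument is correct and arrives at the same conclusion, but the key step---establishing that the pullback already achieves the maximum possible value at every interior vertex---is handled differently than in the paper. The paper fixes $d\in\interior{B}$, takes a shortest path $G$ from $d$ to $\partial B$, observes that $g_\partial|_G$ is injective and that its left inverse $h:g_\partial(G)\to G$ is itself a morphism, and then pushes the candidate mechanism $\mathcal{M}$ \emph{forward} to the sub-line $g_\partial(G)$ via $\mathcal{M}\circ h$; optimality of $\mathcal{M}_\partial$ on that sub-line (which the paper reads off the explicit formula of Theorem~\ref{theo: optimal line mech}) then yields $\Pr[\mathcal{M}_\partial(g_\partial(d))=1]\ge\Pr[\mathcal{M}(d)=1]$. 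You instead verify directly that the distance-profile assignment $d\mapsto\beta(\dist(d,\partial B))$ is $(\epsilon,\delta)$-DP on the whole graph $\mathcal{D}$, by a case analysis on adjacent pairs. Your route is more self-contained---it avoids the forward reference to Theorem~\ref{theo: optimal line mech} and makes the role of the monotonicity of $\beta$ explicit (this is indeed the hinge, and your check that both upper bounds in \eqref{eq: unique optimal dp2}--\eqref{eq: unique optimal dp3} dominate $\beta(k)$ is exactly what is needed). The paper's route is shorter and conceptually pleasing in that it uses morphisms in both directions, but at the cost of invoking the closed form on the line. Your secondary concern---that maximizing from a homogeneous blue boundary yields a homogeneous red boundary---is resolved by the same mechanism: every $d'\in\partial R$ sees an identical constraint from $\partial B$, so the simultaneous maximization in Theorem~\ref{theo: unique optimal} produces a common value across $\partial R$.
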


\begin{proof}
Let $n_B$ and $n_R$ be the parameters of the boundary graph, i.e. $\mathcal{D}_\partial$ is the $(n_B,n_R)$-line. By Theorem \ref{theo: unique optimal}, for each fixed $B_{n_B}$ (the probability of truthful response at the blue boundary dataset) there exists a unique maximal $(\epsilon,\delta)$-DP mechanism $\mathcal{M}_\partial$ on $\mathcal{D}_\partial$. By Theorem \ref{theo: morphism}, the morphism $g_\partial: \mathcal{D} \rightarrow \mathcal{D}_{\partial}$ induces an $(\epsilon,\delta)$-DP mechanism on $\mathcal{D}$ defined by $\mathcal{D}_{\partial} \circ g_\partial$. This mechanism is clearly boundary homogeneous. It follows from Theorem \ref{theo: unique optimal} that there is a unique optimal boundary homogeneous $(\epsilon,\delta)$-DP mechanism on $\mathcal{D}$. Let $\mathcal{M}$ be this mechanism. We show that $\mathcal{M} = \mathcal{D}_{\partial} \circ g_\partial$. 

Let $d \in \interior{B}$. Then, since $\mathcal{M}$ is optimal on $\mathcal{D}$, it holds that $\Pr[\mathcal{M}(d)=1] \geq \Pr[\mathcal{M}_\partial (g_\partial(d))=1]$. Let $d_0$ be the closest dataset to $d$ belonging to $\partial B$. Let $G = \{d, d_{\dist(d,\partial B)-1}, \ldots, d_0 \}$ be a set of datasets which form a shortest path from $d$ to $d_0$. Note that $g_{\partial} \restrictedto{G}$ is injective and thus has a left inverse, which we denote by $h: g_\partial (G) \rightarrow G$. Note that $h$ is a morphism and, therefore, by Theorem \ref{theo: morphism}, $\mathcal{M}\circ h$ is an $(\epsilon,\delta)$-DP mechanism on $\mathcal{D}_\partial$. It follows from Theorem \ref{theo: optimal line mech} that,  since $\mathcal{M}_\partial$ is the optimal mechanism on $\mathcal{D}_\partial$, then $\mathcal{M}_\partial \restrictedto{g_\partial (G)}$ is the optimal mechanism on $g_\partial (G)$. Thus, $\Pr[\mathcal{M}_\partial (g_\partial(d))=1] \geq \Pr[\mathcal{M}(d)=1]$, and, therefore, $\Pr[\mathcal{M}(d)=1] = \Pr[\mathcal{M}_\partial (g_\partial(d))=1]$.

An analogous argument holds for the red set $R$.
\end{proof}

Thus, finding the optimal boundary homogeneous $(\epsilon, \delta)$-DP mechanisms for general colored graphs is equivalent to finding them for the $(n_B,n_R)$-line. In Theorem \ref{theo: optimal line mech} we present a closed expression for the optimal $(\epsilon,\delta)$-DP mechanism on the $(n_B,n_R)$-line. We represent this mechanism in terms of the probability of the points in the blue set $B$ being red as a function of the distance to the boundary $\partial B$, denoted by $R_{n_B-i}$. We show that the mechanism is characterized by two possible behaviors, depending on a transition parameter, defined as follows.

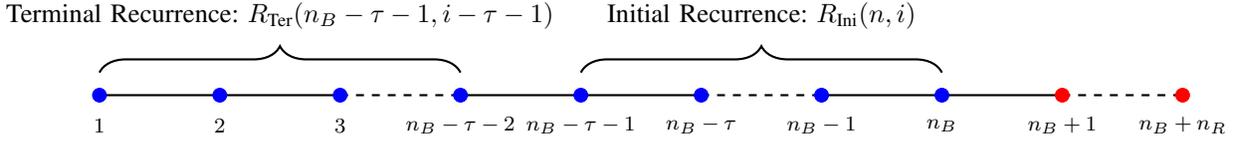
\begin{figure*}[!t]
    \centering
    \begin{tikzpicture}[scale=1.6]

\draw[color=black, thick] (1,0) -- (2,0) -- (3,0);
\draw[color=black, thick, dashed] (3,0) -- (4,0);
\draw[color=black, thick] (4,0) -- (5,0) -- (6,0);
\draw[color=black, thick, dashed] (6,0) -- (7,0);
\draw[color=black, thick] (7,0) -- (8,0) -- (9,0);
\draw[color=black, thick, dashed] (9,0) -- (10,0);

\draw[blue,fill=blue] (1,0) circle (.3ex);
\draw[blue,fill=blue] (2,0) circle (.3ex);
\draw[blue,fill=blue] (3,0) circle (.3ex);
\draw[blue,fill=blue] (4,0) circle (.3ex);
\draw[blue,fill=blue] (5,0) circle (.3ex);
\draw[blue,fill=blue] (6,0) circle (.3ex);
\draw[blue,fill=blue] (7,0) circle (.3ex);
\draw[blue,fill=blue] (8,0) circle (.3ex);

\draw[red,fill=red] (9,0) circle (.3ex);
\draw[red,fill=red] (10,0) circle (.3ex);

\node at (1,-0.25) {\scriptsize{$1$}};
\node at (2,-0.25) {\scriptsize{$2$}};
\node at (3,-0.25) {\scriptsize{$3$}};
\node at (4,-0.25) {\scriptsize{$n_B-\tau-2$}};
\node at (5,-0.25) {\scriptsize{$n_B-\tau-1$}};
\node at (6,-0.25) {\scriptsize{$n_B-\tau$}};
\node at (7,-0.25) {\scriptsize{$n_B-1$}};
\node at (8,-0.25) {\scriptsize{$n_B$}};
\node at (9,-0.25) {\scriptsize{$n_B+1$}};
\node at (10,-0.25) {\scriptsize{$n_B+n_R$}};

\draw [thick,decorate,decoration={brace,amplitude=10pt,mirror,raise=4pt},yshift=0pt] (8,0.1) -- (5,0.1) node [black,midway,yshift=0.9cm] {\small{Initial Recurrence: $R_{\text{Ini}}(n,i)$}};

\draw [thick,decorate,decoration={brace,amplitude=10pt,mirror,raise=4pt},yshift=0pt] (4,0.1) -- (1,0.1) node [black,midway,yshift=0.9cm] {\small{Terminal Recurrence: $R_{\text{Ter}} (n_B-\tau-1 , i-\tau-1)$}};

\end{tikzpicture}
    \caption{The $(n_B,n_R)$-line. In Theorem \ref{theo: optimal line mech}, we show that depending on the probability of being red $R_{n_B}$ at the blue boundary node $n_B$, there may be an initial recurrence phase for computing $R_{n_B-i}$ for its first $\tau+1$ adjacent nodes (Definitions \ref{def: tau} and \ref{def: initial recurrence}). After this possible initial phase, the terminal recurrence in Definition \ref{def: terminal recurrence} determines $R_{n_B-i}$.}
    \label{fig: line tau}
\end{figure*} 

\begin{definition} \label{def: tau}
Let $R_{n_B} \in [0,1]$ and $\epsilon, \delta \in \mathbb{R}_{\geq 0}$. Then, the transition parameter is defined as
\begin{align*}
    \tau = \left \lceil \frac{1}{\epsilon} \log \left( \frac{e^\epsilon+2\delta -1}{ (1-R_{n_B})(e^{3\epsilon}-e^{\epsilon}) + \delta (e^{2\epsilon}+e^{\epsilon}) } \right) \right \rceil,
\end{align*}
if $\epsilon>0$, and $\tau = -1$ if $\epsilon=0$.
\end{definition}

The initial behavior occurs when $i\leq \tau+1$. In this case, the probability of the mechanism outputting the color red is given by the following function.

\begin{definition} \label{def: initial recurrence}
The initial recurrence is given by
\begin{align*}
    R_{\text{Ini}}(n,i) &= 1-e^{ \epsilon i}(1-R_{n})-\frac{\delta (e^{i \epsilon}-1)}{e^{\epsilon}-1}.
\end{align*}
\end{definition}

The terminal behavior occurs when $i > \tau+1$ and is given by the following function.

\begin{definition} \label{def: terminal recurrence}
The terminal recurrence is given by
\begin{align*}
    R_{\text{Ter}}(n,i) =\frac{R_{n}}{e^{\epsilon i}} - \frac{\delta (e^{\epsilon i}-1)}{e^{\epsilon i}(e^\epsilon-1)}
\end{align*}
if $\epsilon>0$ and $R_{\text{Ter}} (n,i) =R_{n} - i \delta$, if $\epsilon=0$.
\end{definition}

These functions are obtained by solving the recurrences in the proof of Theorem \ref{theo: optimal line mech}, our next theorem. In this theorem, we present a closed form for the optimal $(\epsilon,\delta)$-DP mechanism on the $(n_B,n_R)$-line.

\begin{theorem} \label{theo: optimal line mech}
The unique optimal $(\epsilon , \delta)$-DP mechanism on the $(n_B,n_R)$-line with $\Pr[\mathcal{M}(n_B)=2]=R_{n_B}$ is such that
\begin{align*}
    R_{n_B-i} = \left\{\begin{matrix}
    R_{\text{Ini}} (n_B,i) & \text{if} \quad i \leq \tau +1,\\ 
    R_{\text{Ter}} (n_B-\tau-1 , i-\tau-1) & \text{if} \quad \tau+1 < i,
    \end{matrix}\right.
\end{align*}
for every $i \in [1,n_B-1]$.
\end{theorem}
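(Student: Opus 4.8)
\textbf{Proof proposal for Theorem~\ref{theo: optimal line mech}.}

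The plan is to work entirely on the $(n_B,n_R)$-line, which by Theorem~\ref{theo: unique optimal} has a unique optimal $(\epsilon,\delta)$-DP mechanism once $R_{n_B}$ is fixed, and to derive the claimed closed form by identifying which of the four DP inequalities \eqref{eq: unique optimal dp1}--\eqref{eq: unique optimal dp4} is binding at each step as we walk from the blue boundary node $n_B$ toward node $1$. Write $R_j = \Pr[\mathcal{M}(j)=2]$ and $B_j = 1-R_j$. For consecutive blue nodes $j$ and $j-1$ (with $j \le n_B$), optimality of the blue interior means we want $B_{j-1}$ as large as possible, i.e.\ $R_{j-1}$ as small as possible, subject to the constraints relating $R_{j-1}$ and $R_j$. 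The constraint $R_j \le e^\epsilon R_{j-1} + \delta$ forces $R_{j-1} \ge (R_j - \delta)/e^\epsilon$, and the constraint $1-R_{j-1} \le e^\epsilon(1-R_j) + \delta$ forces $R_{j-1} \ge 1 - e^\epsilon(1-R_j) - \delta = 1 - e^\epsilon(1-R_j) - \delta$. The other two inequalities are upper bounds on $R_{j-1}$ and are automatically slack when we minimize. So $R_{j-1}$ equals the \emph{maximum} of these two lower bounds (clipped to $0$), and the whole proof reduces to tracking which bound dominates.

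Next I would analyze the two candidate recurrences separately. If the ``1-minus'' bound is active, we get $R_{j-1} = 1 - e^\epsilon(1-R_j) - \delta$; iterating this from $n_B$ for $i$ steps and summing the resulting geometric series in $\delta$ yields exactly $R_{\text{Ini}}(n_B,i) = 1 - e^{\epsilon i}(1-R_{n_B}) - \delta(e^{i\epsilon}-1)/(e^\epsilon-1)$, which one verifies by a one-line induction. If instead the ``$R$-shrinking'' bound $R_{j-1} = (R_j-\delta)/e^\epsilon$ is active, iterating for $i$ steps and summing gives $R_{\text{Ter}}(n,i) = R_n/e^{\epsilon i} - \delta(e^{\epsilon i}-1)/(e^{\epsilon i}(e^\epsilon-1))$, again by induction; the $\epsilon=0$ degenerate forms ($R_n - i\delta$, and $\tau=-1$ so the initial phase is empty) follow by taking limits or by direct iteration of $R_{j-1} = R_j - \delta$. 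The key structural claim is that the active constraint switches exactly once, and only in the direction ``initial then terminal'': when $R_j$ is still close to $1$ (small $i$), $1-R_j$ is tiny so $1-e^\epsilon(1-R_j)-\delta$ can exceed $(R_j-\delta)/e^\epsilon$, whereas once $R_j$ has decreased enough the roles reverse and stay reversed because $R_{\text{Ter}}$ keeps decreasing. Setting the two lower bounds equal and solving for the crossover index is precisely what produces the transition parameter $\tau$ in Definition~\ref{def: tau}: the equation $1 - e^\epsilon(1-R_j) - \delta = (R_j - \delta)/e^\epsilon$ together with $R_j = R_{\text{Ini}}(n_B,\cdot)$ unwinds to the logarithmic expression, and the ceiling accounts for the discreteness of the node index.

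The main obstacle I anticipate is proving the ``switches exactly once'' monotonicity claim rigorously: I need to show that on the initial segment $i \le \tau+1$ the bound $1-e^\epsilon(1-R_j)-\delta$ genuinely dominates (so $R_{\text{Ini}}$ is correct there), that at $i = \tau+1$ the terminal bound takes over, and that it never reverts — i.e.\ that $R_{\text{Ter}}(n_B-\tau-1,\cdot)$ stays below the competing ``1-minus'' expression for all remaining steps. This amounts to checking the sign of a difference of two exponentials-in-$i$ and confirming it has the right monotonic behavior; it should be a short calculation once the right quantity is isolated, but it is the conceptual crux. I would also need to briefly address the clipping at $0$ (once $R_{\text{Ter}}$ would go negative we set it to $0$, consistent with the footnote convention) and to confirm that the analogous red-side argument — minimizing $B_j$ as we move from $\partial R$ into $\interior R$ — is covered by duality (Remark~\ref{rem: duality}), so that the stated formula for $R_{n_B-i}$ with $i \in [1,n_B-1]$ is the complete description on the blue side. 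Finally, uniqueness is inherited directly from Theorem~\ref{theo: unique optimal}, so no separate argument is needed there.
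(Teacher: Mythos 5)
Your proposal follows essentially the same route as the paper's proof: isolate the two lower bounds on $R_i$ coming from \eqref{eq: unique optimal dp1} and \eqref{eq: unique optimal dp4}, take the larger at each step, solve the two resulting linear recurrences to obtain $R_{\text{Ini}}$ and $R_{\text{Ter}}$, and locate the single crossover index $\tau$ by equating the bounds along the initial recurrence. The ``switches exactly once'' claim you flag as the crux is exactly what the paper establishes (via Lemma~\ref{lem: dp1>dp2} together with a direct computation showing that if $e^\epsilon R_{i+1}+R_{i+1}-e^\epsilon\le\delta$ then the same holds for $R_i=(R_{i+1}-\delta)/e^\epsilon$), and your monotonicity observation that $R$ is non-increasing under the terminal recurrence gives an equally valid one-line argument for it.
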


\begin{proof}
Consider the $(\epsilon,\delta)$-DP conditions in \eqref{eq: unique optimal dp1}-\eqref{eq: unique optimal dp4} with the substitution $R_i = 1-B_i$. We are interested in minimizing the probability of giving the erroneous answer, $R_i$. Therefore, we consider two lower bounds on $R_i$ given by \eqref{eq: unique optimal dp1} and \eqref{eq: unique optimal dp4}, namely,
\begin{align} \label{eq: dp1}
    R_i \geq 1 - e^\epsilon + e^\epsilon R_{i+1} - \delta, 
\end{align}
and
\begin{align} \label{eq: dp2}
    R_i \geq \frac{R_{i+1}-\delta}{e^\epsilon}.
\end{align}
For each $i$, the largest of these bounds is the optimal choice for $R_i$. If $\epsilon=0$, then both bounds are the same and it is easy to check that the statement of the theorem holds. Thus, we assume $\epsilon>0$ in for the rest of this proof.

In Lemma \ref{lem: dp1>dp2}, we show that
\begin{align} \label{eq: dp1>dp2 ineq in proof}
    1 - e^\epsilon + e^\epsilon R_{i+1} - \delta > \frac{R_{i+1}-\delta}{e^\epsilon} ,
\end{align}
if and only if, 
\begin{align} \label{eq: ineq dp1>dp2}
    0 \leq \delta < e^{\epsilon} R_{i+1} + R_{i+1} - e^\epsilon .
\end{align}

Thus, every time \eqref{eq: ineq dp1>dp2} holds, the optimal $R_i$ is such that $R_i = 1 - e^\epsilon + e^\epsilon R_{i+1} - \delta $, i.e. making \eqref{eq: dp1} an equality. If we were to choose \eqref{eq: dp1} every time we would have the recurrence in Lemma \ref{lem: recur DP1}, with solution
\begin{align} \label{eq: initial recurrence}
    R_{n_B-i} = 1-e^{i \epsilon}(1-R_{n_B})-\frac{\delta (1-e^{i \epsilon})}{1-e^{\epsilon}} .
\end{align}

We find the first $i \in \mathbb{N}$ for which \eqref{eq: ineq dp1>dp2} does not occur. This happens when $e^\epsilon R_{n_B-i-1} + R_{n_B-i-1} - e^\epsilon \leq \delta$. Substituting $R_{n_B-i-1} = 1 - e^\epsilon + e^\epsilon R_{n_B-i} - \delta $ and rearranging, we obtain
\begin{align} \label{eq: ineq3}
    R_{n_B-i} \leq \frac{e^{2 \epsilon} + \delta e^\epsilon + 2 \delta + e^\epsilon -1}{e^{2 \epsilon} + e^\epsilon} .
\end{align}
Thus, whenever $R_{n_B-i}$ satisfies \eqref{eq: ineq3}, then $R_{n_B-i-1}$ will not satisfy \eqref{eq: ineq dp1>dp2}, so that the optimal choice for $R_{n_B-i-1}$ is equating it to $\eqref{eq: dp2}$. To find the first value such that this happens we substitute $R_{n_B-i}$ in \eqref{eq: ineq3} with its value in \eqref{eq: initial recurrence} to obtain
\begin{align*}
    1-e^{i \epsilon}(1-R_{n_B})-\frac{\delta (1-e^{i \epsilon})}{1-e^{\epsilon}} \leq \frac{e^{2 \epsilon} + \delta e^\epsilon + 2 \delta + e^\epsilon -1}{e^{2 \epsilon} + e^\epsilon}.
\end{align*}
Solving this for $i$ we obtain
\begin{align*}
    i  \geq \frac{1}{\epsilon} \log \left( \frac{1-e^\epsilon-2\delta}{ (1-R_{n_B})(e^{\epsilon}-e^{3\epsilon}) - \delta (e^{\epsilon}+e^{2\epsilon}) } \right).
\end{align*}
Thus, the smallest $i$ for which this occurs is $i = \tau$ as per Definition \ref{def: tau} (after multiplying both numerator and denominator by $-1$). 

To recap our argument, $R_{n_B - \tau}$ satisfies \eqref{eq: ineq3} which means that $R_{n_B - \tau -1}$ does not satisfy \eqref{eq: ineq dp1>dp2}. Thus, the initial recurrence applies up to $R_{n_B - \tau - 1}$. In other words, $R_{n_B-i} = R_{\text{Ini}} (n_B,i)$ for $i \leq \tau + 1$.

We now prove that for $i> \tau + 1$, the optimal choice is always \eqref{eq: dp2}. We do this by showing that if $R_{i+1}$ does not satisfy \eqref{eq: ineq dp1>dp2}, then $R_i$ does not either. Indeed, if $e^{\epsilon} R_{i+1} + R_{i+1} - e^\epsilon \leq \delta$, then, since $R_{i+1}$ does not satisfy \eqref{eq: ineq dp1>dp2}, $R_i = \frac{R_{i+1}-\delta}{e^\epsilon}$. Thus,
\begin{align*}
    e^\epsilon R_i + R_i - e^\epsilon &= R_{i+1} -\delta + \frac{R_{i+1}-\delta}{e^\epsilon} -e^\epsilon \\
    &= e^\epsilon R_{i+1} - e^\epsilon \delta + R_{i+1} - \delta - e^{2 \epsilon} \\
    &\leq - e^\epsilon \delta \leq \delta .
\end{align*}

Therefore, for $i> \tau+1$, the optimal $R_{n_B-i}$ is equating \eqref{eq: dp2}. This gives us a recurrence, which by Lemma \ref{lem: recur DP2}, has solution $R_{n_B-i} = R_{\text{Ter}} (n_B-\tau-1, i-\tau-1)$ for $i> \tau +1$.
\end{proof}

In the following example, we compute the optimal scheme for the $(4,3)$-line satisfying the boundary condition $R_4 = 0.8$.

\begin{example} \label{ex: (4,3)-line}
Consider the $(4,3)$-line with the boundary satisfying $R_4 = 0.8$, and with privacy parameters $\epsilon = \log (1.3)$ and $\delta = 0.1$. Then, $\tau = 1$, which means that $R_2$ and $R_3$ are calculated via the initial recurrence and $R_1$ via the terminal one. Performing this calculation we obtain, $R_3 = 0.64$, $R_2=0.432$, and $R_1 = 0.2553$.

We deal with the red set as noted in Remark \ref{rem: duality}. The dual of the $(4,3)$-line with the boundary satisfying $R_4 = 0.8$ is the $(3,4)$-line with boundary $R_4 = 0.2$. Then, $\tau = -2$, which means that $R_1$, $R_2$, and $R_3$ are calculated via the terminal recurrence. Performing this calculation we obtain, $R_3 = 1/13$, and $R_2 = R_1 = 0$. Thus, in the original $(4,3)$-line, the optimal mechanism satisfies $B_5 = 1/13$, and $B_6 = B_7 = 0$.
\end{example}

Combining Theorems \ref{theo: general to line} and \ref{theo: optimal line mech} we present a closed form for the optimal boundary homogeneous $(\epsilon , \delta)$-DP mechanism.

\begin{theorem} \label{theo: general graph}
Let $(\mathcal{D},\sim)$ be a set of datasets with a neighboring relation and $n_B = \max_{d \in B} \dist(d,\partial B) + 1$. Then, the optimal boundary homogeneous $(\epsilon , \delta)$-DP mechanism, $\mathcal{M}: \mathcal{D} \rightarrow \mathcal{V}$, is such that, for every $d \in \interior{B}$, 
\begin{align*}
    R_{d} &= R_{\text{Ini}} (n_B, \dist(d,\partial B)) \quad \text{if} \quad \dist(d,\partial B) \leq \tau +1 , \\
    R_{d} &= R_{\text{Ter}} (n_B-\tau-1, \dist(d,\partial B)-\tau-1) \quad \text{otherwise}.
\end{align*}

% \begin{align*}
%     R_{d} = \left\{\begin{matrix}
%     R_{\text{Ini}} (n_B, \dist(d,\partial B)) & \dist(d,\partial B) \leq \tau +1\\ 
%     R_{\text{Ter}} (n_B-\tau-1, \dist(d,\partial B)-\tau-1) & \dist(d,\partial B) > \tau +1
%     \end{matrix}\right.
% \end{align*}
\end{theorem}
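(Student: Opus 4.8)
The plan is to obtain Theorem \ref{theo: general graph} as an essentially immediate consequence of Theorems \ref{theo: general to line} and \ref{theo: optimal line mech}: transport the closed-form line solution back to $\mathcal{D}$ along the boundary morphism and unwind the definitions. This is a ``combination'' theorem, so the work is bookkeeping rather than new ideas.

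First I would invoke Theorem \ref{theo: general to line}: the optimal boundary homogeneous $(\epsilon,\delta)$-DP mechanism on $\mathcal{D}$ is the pullback $\mathcal{M} = \mathcal{M}_\partial \circ g_\partial$, where $\mathcal{D}_\partial$ is the boundary graph, i.e.\ the $(n_B,n_R)$-line with $n_B = \max_{d \in B}\dist(d,\partial B)+1$ (the same $n_B$ as in the statement) and $n_R = \max_{d\in R}\dist(d,\partial R)+1$, and $\mathcal{M}_\partial$ is the optimal mechanism on $\mathcal{D}_\partial$ for the prescribed boundary value $R_{n_B} = \Pr[\mathcal{M}_\partial(n_B)=2]$. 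Next I would unwind Definition \ref{def:morphto:nB:nR:line}: for $d \in B$ one has $g_\partial(d) = n_B - \dist(d,\partial B)$, so for $d \in \interior{B}$, writing $i := \dist(d,\partial B)$, we get $R_d = \Pr[\mathcal{M}(d)=2] = \Pr[\mathcal{M}_\partial(n_B - i)=2] = R_{n_B-i}$ in the notation of Theorem \ref{theo: optimal line mech}. Finally I would substitute the closed form of Theorem \ref{theo: optimal line mech}, namely $R_{n_B-i} = R_{\text{Ini}}(n_B,i)$ when $i \le \tau+1$ and $R_{n_B-i} = R_{\text{Ter}}(n_B-\tau-1,\, i-\tau-1)$ when $i > \tau+1$, and replace $i$ by $\dist(d,\partial B)$ to recover the two displayed equations verbatim.

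The only points requiring care --- the ``obstacle,'' such as it is --- are three bookkeeping checks. One must verify the index range: since $n_B-1 = \max_{d\in B}\dist(d,\partial B)$, every $d \in \interior{B}$ satisfies $1 \le \dist(d,\partial B) \le n_B-1$, so $n_B - i$ always lies in $[1,n_B-1]$, exactly the range over which Theorem \ref{theo: optimal line mech} is stated. One must also note that the transition parameter $\tau$ of Definition \ref{def: tau} is a function of the value $R_{n_B}$ fixed at the blue boundary, so the $\tau$ governing $\mathcal{M}_\partial$ on $\mathcal{D}_\partial$ is precisely the one in the statement; and the case $\epsilon = 0$ needs no separate treatment, since $\tau$, $R_{\text{Ini}}$, and $R_{\text{Ter}}$ are already defined for $\epsilon=0$ and Theorem \ref{theo: optimal line mech} covers it. Lastly, the theorem only records the behavior on $\interior{B}$; the analogous formulas on $\interior{R}$ (with $n_R$, the dual transition parameter, and $B$ and $R$ interchanged) follow by applying the identical argument to the dual colored graph, as in Remark \ref{rem: duality}.
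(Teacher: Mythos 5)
Your proposal is correct and matches the paper's approach exactly: the paper's proof is the one-line statement that the result ``follows directly from combining Theorems \ref{theo: general to line} and \ref{theo: optimal line mech},'' and your argument is precisely that combination, with the index bookkeeping (that $g_\partial(d)=n_B-\dist(d,\partial B)$ lands in $[1,n_B-1]$ for $d\in\interior{B}$) and the dual treatment of $R$ spelled out explicitly.
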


\begin{proof}
Follows directly from combining Theorems \ref{theo: general to line} and \ref{theo: optimal line mech}.
\end{proof}

Thus, if we consider the colored graph on the left hand side of Fig.~\ref{fig: boundary graph} subject to $R_h = R_\ell = 1/13$, then, the optimal mechanism is such that $R_a = R_b= 0$, $R_c=R_g=R_k = 0$, $R_i = R_n = R_q = 0.2$, $R_d = R_o =R_m = 0.36$, $R_j =R_p =R_r=0.568$, and $R_e = R_f = 0.7447$. We note that this mechanism can be obtained by pulling back the optimal mechanism for the $(3,4)$-line in Example~\ref{ex: (4,3)-line}.

We now show that when the probability in the boundary of the blue set is such that the output blue is more likely, the optimal mechanism depends only on the terminal recurrence.

\begin{corollary} \label{cor: half}
Consider the setting in Theorem \ref{theo: general graph}. If the boundary probability $R_{n_B} \leq \frac{1}{2}$, then $R_{d} = R_{\text{Ter}} (n_B , \dist(d,\partial B))$, i.e., for every $d \in \interior{B}$,
\begin{align*}
    R_d = \frac{R_{n_B}}{e^{\epsilon \dist(d,\partial B)}} - \frac{\delta (e^{\epsilon \dist(d,\partial B)}-1)}{e^{\epsilon \dist(d,\partial B)}(e^\epsilon-1)}.
\end{align*}

\end{corollary}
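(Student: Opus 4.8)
\textbf{Proof proposal for Corollary \ref{cor: half}.}

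The plan is to show that the hypothesis $R_{n_B} \le \tfrac12$ forces the transition parameter $\tau$ to be at most $-1$, so that the condition $\dist(d,\partial B) \le \tau+1$ in Theorem \ref{theo: general graph} is never met (since distances are at least $1$), and hence only the terminal recurrence branch applies. First I would handle the trivial case $\epsilon = 0$ separately: there $\tau = -1$ by Definition \ref{def: tau}, so $\dist(d,\partial B) \le \tau+1 = 0$ is impossible and every interior node uses $R_{\text{Ter}}$, which reduces to $R_{n_B} - \dist(d,\partial B)\,\delta$; one checks this agrees with the displayed formula in the limit $\epsilon \to 0$ (or simply quote the $\epsilon = 0$ clause of Definition \ref{def: terminal recurrence}).

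For $\epsilon > 0$, recall
\[
\tau = \left\lceil \frac{1}{\epsilon} \log\!\left( \frac{e^\epsilon + 2\delta - 1}{(1-R_{n_B})(e^{3\epsilon} - e^\epsilon) + \delta(e^{2\epsilon}+e^\epsilon)} \right) \right\rceil .
\]
I want to show $\tau \le -1$, i.e. the argument of the ceiling is at most $0$, i.e. the quantity inside the logarithm is at most $1$. Equivalently, it suffices to prove
\[
e^\epsilon + 2\delta - 1 \;\le\; (1-R_{n_B})(e^{3\epsilon} - e^\epsilon) + \delta(e^{2\epsilon}+e^\epsilon).
\]
Using $R_{n_B} \le \tfrac12$, we have $1 - R_{n_B} \ge \tfrac12$, so the right-hand side is at least $\tfrac12(e^{3\epsilon}-e^\epsilon) + \delta(e^{2\epsilon}+e^\epsilon)$. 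Hence it is enough to verify the cleaner inequality
\[
e^\epsilon + 2\delta - 1 \;\le\; \tfrac12\big(e^{3\epsilon} - e^\epsilon\big) + \delta\big(e^{2\epsilon}+e^\epsilon\big),
\]
which, after moving terms, amounts to $\tfrac12 e^{3\epsilon} - \tfrac32 e^\epsilon + 1 \ge \delta(2 - e^{2\epsilon} - e^\epsilon)$. For $\epsilon > 0$ the right-hand side is negative (since $e^{2\epsilon}+e^\epsilon > 2$) while the left-hand side $\tfrac12 e^{3\epsilon} - \tfrac32 e^\epsilon + 1$ is nonnegative for all $\epsilon \ge 0$: at $\epsilon = 0$ it equals $0$, and its derivative $\tfrac32 e^{3\epsilon} - \tfrac32 e^\epsilon = \tfrac32 e^\epsilon(e^{2\epsilon}-1) \ge 0$. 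So the inequality holds, giving $\tau \le -1$.

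Since $\tau \le -1$, for every $d \in \interior{B}$ we have $\dist(d,\partial B) \ge 1 > 0 \ge \tau + 1$, so the ``otherwise'' branch of Theorem \ref{theo: general graph} applies and $R_d = R_{\text{Ter}}(n_B - \tau - 1,\, \dist(d,\partial B) - \tau - 1)$. The last step is to check that $R_{\text{Ter}}$ is ``shift-invariant'' in the precise sense that $R_{\text{Ter}}(n_B - \tau - 1,\, k - \tau - 1) = R_{\text{Ter}}(n_B, k)$ whenever $R_{n_B - \tau - 1}$ itself equals $R_{\text{Ini}}(n_B, \tau+1)$ — but more simply, since the terminal recurrence is just the repeated application of the bound \eqref{eq: dp2}, namely $R_i = (R_{i+1}-\delta)/e^\epsilon$, starting from $R_{n_B}$, its closed-form solution in terms of the distance to the boundary node $n_B$ is directly $R_{\text{Ter}}(n_B,\dist(d,\partial B))$; I would note that $R_{\text{Ter}}(n,i)$ as defined satisfies $R_{\text{Ter}}(n,0) = R_n$ and the recurrence $R_{\text{Ter}}(n,i) = (R_{\text{Ter}}(n,i-1) - \delta)/e^\epsilon$ — wait, more carefully, $R_{\text{Ter}}(n, i)$ is indexed so that larger $i$ means farther from the boundary node $n$, and one verifies $R_{\text{Ter}}(n,i) = (R_{\text{Ter}}(n,i+1)\cdot e^\epsilon + \delta)$ inverted appropriately — in any case a one-line algebraic check using the explicit formula in Definition \ref{def: terminal recurrence} shows $R_{\text{Ter}}(n_B-\tau-1, i - \tau - 1) = R_{\text{Ter}}(n_B, i)$ for all $i$, because both sides equal $R_{n_B}e^{-\epsilon i} - \delta(e^{\epsilon i}-1)/(e^{\epsilon i}(e^\epsilon-1))$ (the dependence is only on the difference $n - (\text{second argument})$ being $n_B - i$, which is preserved by the shift). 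Substituting $i = \dist(d,\partial B)$ yields the displayed formula.

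\textbf{Main obstacle.} The only real work is the inequality manipulation showing $\tau \le -1$ under $R_{n_B}\le\tfrac12$; one must be careful about the signs of the numerator and denominator inside the logarithm (both are negative before the sign flip mentioned in the proof of Theorem \ref{theo: optimal line mech}, so the direction of the inequality after taking logs must be tracked). The shift-invariance of $R_{\text{Ter}}$ is routine algebra but needs to be stated cleanly so the reader sees why the ``$n_B - \tau - 1$'' bookkeeping collapses back to ``$n_B$''.
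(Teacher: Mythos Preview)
Your overall strategy---show $\tau \le -1$ so that the initial branch in Theorem \ref{theo: general graph} is vacuous---is exactly the ``alternative'' route the paper mentions in its one-line proof. The paper's primary argument is different and shorter: it invokes Lemma \ref{lem: ineq} with $x=R_{i+1}$, $y=e^\epsilon$, $z=\delta$ to show directly that whenever $R_{i+1}\le\tfrac12$ the bound \eqref{eq: dp2} dominates \eqref{eq: dp1}, so the terminal recurrence is selected at every step starting from $R_{n_B}$. This avoids computing $\tau$ altogether and also dispenses with your ``shift-invariance'' bookkeeping, since one is simply iterating \eqref{eq: dp2} from the boundary value $R_{n_B}$.

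There is, however, a genuine gap in your execution of the $\tau$ route. You write ``I want to show $\tau\le -1$, i.e.\ the argument of the ceiling is at most $0$,'' but that equivalence is false: $\lceil x\rceil\le -1$ holds if and only if $x\le -1$, not $x\le 0$. Your inequality work only establishes that the ratio inside the logarithm is at most $1$, hence the argument of the ceiling is at most $0$, hence $\tau\le 0$. If $\tau=0$ then $\tau+1=1$ and the case $i=1$ falls into the \emph{initial} branch, which is not what you want. What you actually need is that the ratio is at most $e^{-\epsilon}$, equivalently
\[
e^\epsilon\bigl(e^\epsilon+2\delta-1\bigr)\;\le\;(1-R_{n_B})(e^{3\epsilon}-e^\epsilon)+\delta(e^{2\epsilon}+e^\epsilon).
\]
This stronger inequality is still true under $R_{n_B}\le\tfrac12$: bounding $1-R_{n_B}\ge\tfrac12$ and simplifying reduces it to $-\tfrac12(e^\epsilon-1)^2\le\delta(e^\epsilon-1)$, which is immediate for $\epsilon>0$ and $\delta\ge 0$. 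So your approach is salvageable, but the inequality you proved is one factor of $e^\epsilon$ too weak. (Once $\tau\le -1$ is established, note that $n_B-\tau-1\ge n_B$ and the ``shift'' in $R_{\text{Ter}}$ collapses because the terminal recursion is simply iterated from $R_{n_B}$; your discussion of this point is on the right track but would benefit from being stated as: with $\tau\le -1$ the proof of Theorem \ref{theo: optimal line mech} selects \eqref{eq: dp2} at every step from $i=1$ onward, and Lemma \ref{lem: recur DP2} with starting value $R_{n_B}$ gives the displayed formula directly.)
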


\begin{proof}
This follows from Lemma \ref{lem: ineq} by substituting $x=R_n$, $y=e^\epsilon$, $z=\delta$, and noting that the inequality in the lemma implies that the optimal bound at each step is given in \eqref{eq: dp2}. Alternatively, one can show that, in this case, $\tau < 0$.
\end{proof}

A particular case of boundary homogeneity is when the mechanism gives no preference for blue or red at the boundary.

\begin{definition}
A mechanism $\mathcal{M}: \mathcal{D} \rightarrow \mathcal{V}$ is balanced (or fair) if $B_d = R_{d'}$ for every $d \in \partial B$ and $d' \in \partial R$.
\end{definition}

In the case of balanced mechanisms, the optimal mechanism takes the following simple form.

\begin{corollary}
The optimal balanced $(\epsilon,\delta)$-DP mechanism is such that, for every $d \in \interior{B}$,
\begin{align*}
    R_d = \frac{e^\epsilon - 1 - \delta (e^{\epsilon (\dist(d,\partial B)+1)} +e^{\epsilon \dist(d,\partial B)} -2)}{e^{\epsilon \dist(d,\partial B)} (e^\epsilon +1)(e^\epsilon -1)}.
\end{align*}
\end{corollary}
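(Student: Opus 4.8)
The plan is to reduce this to Corollary~\ref{cor: half} by pinning down the single free parameter that survives the balanced condition. A balanced mechanism is in particular boundary homogeneous (if $d_1,d_2\in\partial B$ then $B_{d_1}=R_{d'}=B_{d_2}$ for any $d'\in\partial R$, and $\partial R\neq\emptyset$ whenever $\partial B\neq\emptyset$), so by Theorem~\ref{theo: general graph} it is completely determined, on $\interior B$, by the single number $\rho:=R_{n_B}$, the probability of an erroneous (red) answer at the blue boundary, with $n_B=\max_{d\in B}\dist(d,\partial B)+1$ fixed by the graph. The balanced condition $B_d=R_{d'}$ for $d\in\partial B$, $d'\in\partial R$ transports, under the boundary morphism, to the requirement $B_{n_B}=R_{n_B+1}$ on the boundary line, i.e. $R_{n_B+1}=1-\rho$. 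Since $n_B$ and $n_B+1$ are neighbors, writing out the four $(\epsilon,\delta)$-DP constraints \eqref{eq: unique optimal dp1}--\eqref{eq: unique optimal dp4} between them with $R_{n_B}=\rho$ and $R_{n_B+1}=1-\rho$ collapses, after cancellations, to the single two-sided bound
\[
\frac{1-\delta}{1+e^\epsilon}\ \le\ \rho\ \le\ \frac{e^\epsilon+\delta}{1+e^\epsilon},
\]
and this interval is nonempty because $1-e^\epsilon\le 0\le 2\delta$; hence balanced $(\epsilon,\delta)$-DP mechanisms exist.

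Next I would show that the smallest admissible $\rho$, namely $\rho^\ast=\frac{1-\delta}{1+e^\epsilon}$, gives the dominating balanced mechanism. On the blue side this is a monotonicity statement read off from the proof of Theorem~\ref{theo: optimal line mech}: there each $R_i$ is the maximum of the two lower bounds \eqref{eq: dp1}--\eqref{eq: dp2}, both of which are nondecreasing functions of $R_{i+1}$; propagating from the boundary inward, every $R_d$ with $d\in\interior B$ is a nondecreasing function of the seed $\rho$, while at the blue boundary the truthful probability is $1-\rho$, which is nonincreasing in $\rho$. By Remark~\ref{rem: duality} the dual argument governs the red side, where the relevant boundary error probability equals $1-R_{n_B+1}=\rho$ as well, so the truthful probability at \emph{every} dataset is nonincreasing in $\rho$. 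Combining this with Theorem~\ref{theo: unique optimal} (every balanced mechanism is dominated by the unique maximal boundary-homogeneous one with the same $\rho$) shows the $\rho^\ast$-mechanism dominates all balanced mechanisms.

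Finally I would invoke Corollary~\ref{cor: half}: since $\epsilon\ge 0$ and $\delta\ge 0$ give $e^\epsilon\ge 1\ge 1-2\delta$, we get $\rho^\ast=\frac{1-\delta}{1+e^\epsilon}\le\tfrac12$, so for every $d\in\interior B$, writing $\Delta:=\dist(d,\partial B)$,
\[
R_d \;=\; R_{\text{Ter}}(n_B,\Delta)\;=\;\frac{\rho^\ast}{e^{\epsilon\Delta}}-\frac{\delta(e^{\epsilon\Delta}-1)}{e^{\epsilon\Delta}(e^\epsilon-1)}.
\]
Substituting $\rho^\ast=\frac{1-\delta}{1+e^\epsilon}$, clearing to the common denominator $e^{\epsilon\Delta}(e^\epsilon+1)(e^\epsilon-1)$, and using the identity $(e^\epsilon-1)+(e^{\epsilon\Delta}-1)(e^\epsilon+1)=e^{\epsilon(\Delta+1)}+e^{\epsilon\Delta}-2$ yields the stated closed form; the degenerate case $\epsilon=0$ is handled by the limiting form $R_{\text{Ter}}(n,i)=R_n-i\delta$. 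The main obstacle is precisely the monotonicity step of the previous paragraph: one must rule out the possibility that lowering $\rho$ helps the blue interior only at the expense of the red interior or the boundary, and this is exactly what the balanced condition buys us, since it forces the blue-boundary error and the red-boundary error to be the \emph{same} quantity $\rho$, so that the (monotone) line recurrences on the two sides are driven in the same direction.
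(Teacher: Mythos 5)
Your proof is correct and follows essentially the same route as the paper's: both reduce to the cross-boundary DP constraint, obtain the optimal boundary error probability $\frac{1-\delta}{1+e^\epsilon}\leq\frac{1}{2}$, and conclude via Corollary~\ref{cor: half}. The only difference is that you explicitly justify, via monotonicity of the line recurrences in the seed $\rho$, why the extremal boundary value dominates every other balanced choice --- a step the paper's proof leaves implicit in the phrase ``maximizing $x$.''
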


\begin{proof}
Let $x = B_d = R_{d'}$ for every $d \in \partial B$ and $d' \in \partial R$. Then, the $(\epsilon,\delta)$-DP conditions are equivalent to $x \leq e^\epsilon (1-x)+\delta$ and $(1-x) \leq e^\epsilon x + \delta$, of which only the first equation gives an upper bound on $x$. Maximizing $x$, we obtain $B_d = x = \frac{e^\epsilon + \delta}{1 + e^\epsilon}$ which implies in the boundary $R_d = \frac{1-\delta}{1+e^\epsilon}$, for every $d \in \partial B$. Since $R_d \leq \frac{1}{2}$, the result follows from Corollary~\ref{cor: half}.
\end{proof}

Thus, if we consider the voting example in Fig.~\ref{fig: ex graphs a}, the optimal balanced $(\log(2),0.1)$-DP mechanism is such that $R_{111}=0.1$, $R_{211} = R_{121} = R_{112}= 0.3$, $R_{122} = R_{212} = R_{221}= 0.7$, and $R_{222} = 0.9$.

\appendix

In this Appendix, we prove Lemmas 1 through 4 used in the results of the main text. To apply them to the main results we generally substitute the variable $x$ by the probability $R_n$, the variable $y$ by the privacy parameter $\epsilon$, and the  variable $z$ by the privacy parameter $\delta$.

The first lemma we present shows the conditions under which \eqref{eq: dp1>dp2 ineq in proof} is true in Theorem \ref{theo: optimal line mech}.

\begin{lemma} \label{lem: dp1>dp2}
Let $x,y,z \in \mathbb{R}$ be such that $0\leq x \leq 1$, $y \geq 1$, and $0\leq z < 1$. Then, 
\begin{align} \label{eq: lem dp1>dp2}
    1 - y + xy - z > \frac{x-z}{y}
\end{align}
if and only if $x=1$, $y>1$, or $z <xy +x-y$.
\end{lemma}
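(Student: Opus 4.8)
The plan is a straightforward algebraic reduction of \eqref{eq: lem dp1>dp2} to a single factored inequality, followed by a sign analysis. Since $y \geq 1 > 0$, I would begin by multiplying both sides of \eqref{eq: lem dp1>dp2} by $y$, which preserves the inequality, obtaining $y - y^2 + xy^2 - yz > x - z$. Transposing everything to one side yields $xy^2 - x - y^2 + y - yz + z > 0$, and the crux of the argument is to observe that the left-hand side factors as
\[
xy^2 - x - y^2 + y - yz + z = (y-1)\bigl(xy + x - y - z\bigr).
\]
This identity is the one computation that must be checked, and it is elementary: the $x$-terms give $x(y^2-1) = x(y-1)(y+1)$, the $z$-terms give $z(1-y) = -z(y-1)$, and the leftover $y - y^2 = -y(y-1)$, so $(y-1)$ factors out of the whole expression.

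Having reduced \eqref{eq: lem dp1>dp2} to the equivalent inequality $(y-1)(xy + x - y - z) > 0$, I would close with a sign analysis that uses the hypothesis $y \geq 1$, hence $y - 1 \geq 0$. If $y = 1$ the product vanishes, so the strict inequality fails — consistent with the fact that at $y = 1$ the original inequality reads $x - z > x - z$, which is never strict. Therefore $y > 1$ is necessary; and when $y > 1$ the factor $y - 1$ is strictly positive, so $(y-1)(xy + x - y - z) > 0$ holds precisely when $xy + x - y - z > 0$, i.e. when $z < xy + x - y$. This is exactly the characterization in the statement, with the case $x = 1$ singled out because then $xy + x - y = 1$ and the hypothesis $z < 1$ makes $z < xy + x - y$ automatic.

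I do not expect a genuine obstacle here: the argument is purely a matter of clearing the denominator and factoring. The only points requiring care are verifying the factorization above correctly and handling the degenerate boundary values $y = 1$ and $x = 1$ when matching the factored inequality back to the disjunctive form in which the lemma is phrased.
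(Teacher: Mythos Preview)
Your proposal is correct and essentially follows the paper's approach: both clear the denominator by multiplying through by $y$, factor out $(y-1)$, and then perform a sign analysis. The only cosmetic difference is that the paper first splits into the cases $x=1$ and $x\neq 1$ (dividing by $x-1<0$ in the latter before factoring as $(y-1)\bigl(y-\frac{x-z}{1-x}\bigr)<0$), whereas you factor $(y-1)(xy+x-y-z)$ directly and treat the boundary values $y=1$ and $x=1$ afterward; your handling of $y=1$ is in fact slightly more explicit than the paper's.
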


\begin{proof}
If $x=1$, then \eqref{eq: lem dp1>dp2} takes the form $1-z>\frac{1-z}{y}$. Since $y \geq 1$, this is equivalent to $y(1-z)>1-z$. But this occurs if and only if $y>1$ and $0\leq z < 1$. 

If $x \neq 1$, then \eqref{eq: lem dp1>dp2} is equivalent to $y-y^2+xy^2-zy>x-z$. Putting all terms on one side and dividing by $(x-1)<0$ we obtain
\begin{align*}
    \frac{y-y^2+xy^2-zy-x+z}{x-1}<0 ,
\end{align*}
which can be factored into
\begin{align*}
    \left( y - 1 \right) \left( y - \frac{x-z}{1-x} \right) < 0 .
\end{align*}
Since $y>1$, this is equivalent to $y < \frac{x-z}{1-x}$, which is equivalent to $z <xy +x-y$.
\end{proof}

The next lemma solves the recurrence in Theorem \ref{theo: optimal line mech} used to define the initial recurrence, $R_{\text{Ini}}$, in Definition \ref{def: initial recurrence}.

\begin{lemma} \label{lem: recur DP1}
Let $n \in \mathbb{N}$ and let $x_1, \ldots, x_n,y,z \in \mathbb{R}$ such that $y>1$ satisfy  
\begin{align} \label{eq: lem: recur DP1 recursion}
    x_i = 1+yx_{i+1}-y-z \quad \text{for every $i < n$}.
\end{align}
Then, for every $i < n$,
\begin{align}
    x_{i} &= 1+y^{n-i}(x_n-1) - z (1+y+\ldots+y^{n-i-1}) \nonumber \\
    &= 1+y^{n-i}(x_n-1) - \frac{z(1-y^{n-i})}{1-y} . \label{eq: lem: recur DP1 solution}
\end{align}
\end{lemma}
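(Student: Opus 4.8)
The plan is to reduce the affine recurrence \eqref{eq: lem: recur DP1 recursion} to a single-term linear recurrence by an additive shift, solve that by induction on the distance from the index $n$, and then re-expand the resulting finite geometric sum using $y\neq 1$.

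\emph{Step 1: eliminate the constant.} I would set $u_i = x_i - 1$ for $1 \le i \le n$. Substituting $x_i = u_i + 1$ into \eqref{eq: lem: recur DP1 recursion}, the constant terms cancel ($1 + y - y = 1$) and the recurrence collapses to $u_i = y\, u_{i+1} - z$ for every $i < n$. This is the key simplification: it removes the $1-y$ inhomogeneity and leaves a clean one-term linear recurrence with forcing constant $-z$.

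\emph{Step 2: solve by induction.} I would prove by induction on $k = n - i \ge 0$ the formula $u_{n-k} = y^{k} u_n - z\sum_{j=0}^{k-1} y^{j}$. The base case $k=0$ is the empty sum, hence immediate. For the step, apply $u_{n-k-1} = y\, u_{n-k} - z$, substitute the induction hypothesis, and collect terms: $y\big(y^{k} u_n - z\sum_{j=0}^{k-1}y^{j}\big) - z = y^{k+1} u_n - z\sum_{j=0}^{k} y^{j}$, which is the claim for $k+1$. Taking $k = n-i$ and undoing the shift gives $x_i = 1 + y^{n-i}(x_n - 1) - z\,(1 + y + \cdots + y^{n-i-1})$, the first equality in \eqref{eq: lem: recur DP1 solution}.

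\emph{Step 3: close the geometric sum.} Since $y > 1$ and in particular $y \neq 1$, the finite geometric series evaluates as $1 + y + \cdots + y^{n-i-1} = \tfrac{y^{n-i}-1}{y-1} = \tfrac{1-y^{n-i}}{1-y}$, which yields the second equality in \eqref{eq: lem: recur DP1 solution}. I do not expect any real obstacle; the only points requiring care are the index bookkeeping (the sum has $n-i$ terms, ending at exponent $n-i-1$) and the fact that $y\neq 1$ is exactly what licenses the closed form for the geometric sum — which is why the hypothesis reads $y>1$ rather than $y\ge 1$.
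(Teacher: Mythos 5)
Your proof is correct, and it is more explicit than the paper's. The paper disposes of this lemma in one line: since \eqref{eq: lem: recur DP1 recursion} is a first-order linear recursion, the terminal value $x_n$ determines all the $x_i$ uniquely, so it suffices to verify that the stated closed form satisfies the recurrence --- a guess-and-check argument. You instead \emph{derive} the formula: the shift $u_i = x_i - 1$ absorbs the inhomogeneity (because $1$ is the fixed point of the constant part of the map $x \mapsto 1 + yx - y$), reducing to $u_i = y\,u_{i+1} - z$, which you unroll by induction on $k = n-i$ and then resum. The computations all check out, including the index bookkeeping (the sum has $n-i$ terms ending at exponent $n-i-1$) and your observation that only $y \neq 1$, not the full strength of $y > 1$, is needed to close the geometric series. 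Your approach explains where the formula comes from; the paper's buys brevity at the cost of pulling the answer out of a hat. Either is a complete proof.
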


\begin{proof}
Since \eqref{eq: lem: recur DP1 recursion} is a linear recursion it has a unique solution, which can be verified by substituting \eqref{eq: lem: recur DP1 solution} in \eqref{eq: lem: recur DP1 recursion}.
\end{proof}

The following lemma solves the recurrence in Theorem \ref{theo: optimal line mech} used to define the terminal recurrence, $R_{\text{Ter}}$, in Definition \ref{def: terminal recurrence}.

\begin{lemma} \label{lem: recur DP2}
Let $n \in \mathbb{N}$ and let $x_1, \ldots, x_n,y,z \in \mathbb{R}$ such that $y>1$ satisfy 
\begin{align} \label{eq: lem: recur DP2 recursion}
    x_i = \frac{x_{i+1} - z}{y} \quad \text{for every $i < n$}.
\end{align}
Then, for every $i < n$,
\begin{align}
    x_{i} &= \frac{x_n - z(1+y+\ldots+y^{n-i-1})}{y^{n-i}} \nonumber \\
    &= \frac{x_n}{y^{n-i}} - \frac{z (1-y^{n-i})}{(1-y)y^{n-i}} \label{eq: lem: recur DP2 solution}
\end{align}
\end{lemma}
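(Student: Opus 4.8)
The plan is to treat \eqref{eq: lem: recur DP2 recursion} as a first-order linear recurrence that we run downward from index $n$, solve it by unrolling, and recognize the emerging finite geometric sum; this is the same strategy used for Lemma \ref{lem: recur DP1}. Concretely, I would first rewrite the recurrence as $x_i = y^{-1}x_{i+1} - y^{-1}z$, and then iterate it $n-i$ times from index $i$ up to index $n$, which produces $x_i = y^{-(n-i)}x_n - z\left(y^{-1} + y^{-2} + \cdots + y^{-(n-i)}\right)$. Pulling $y^{-(n-i)}$ out of the sum turns the geometric factor into $1 + y + \cdots + y^{n-i-1}$, which is exactly the first line of \eqref{eq: lem: recur DP2 solution}. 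The second line then follows from the closed form $1 + y + \cdots + y^{n-i-1} = \frac{1-y^{n-i}}{1-y}$, which is legitimate since $y > 1$ forces $y \neq 1$; substituting it and simplifying so that the denominator reads $(1-y)y^{n-i}$ gives the stated expression.

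An alternative, and probably cleaner to write, is to verify \eqref{eq: lem: recur DP2 solution} directly by downward induction on $i$ and then invoke uniqueness of the solution of a linear recurrence, exactly as the proof of Lemma \ref{lem: recur DP1} does. The base case $i = n-1$ reduces to checking that the right-hand side of \eqref{eq: lem: recur DP2 solution} with $n-i = 1$ collapses to $(x_n - z)/y$, which is the recurrence itself. For the inductive step, I would assume the formula at $i+1$ (with $i+1 < n$), substitute it into $x_i = (x_{i+1}-z)/y$, and combine the two $z$-terms over the common denominator $(1-y)y^{n-i}$; the bracketed numerator $(1 - y^{n-i-1}) + (1-y)y^{n-i-1}$ telescopes to $1 - y^{n-i}$, which closes the induction and matches the claimed formula.

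There is essentially no conceptual difficulty here; the work is purely algebraic bookkeeping. The only points requiring care are keeping the final denominator in the form $(1-y)y^{n-i}$ rather than $(y-1)y^{n-i}$, tracking the exponents correctly when the $y^{-(n-i)}$ factor is cleared, and confirming that the two displayed lines of \eqref{eq: lem: recur DP2 solution} agree. I would also remark that the hypothesis $y > 1$ is used only to guarantee $y \neq 1$ so that the geometric closed form is available; in the application to Theorem \ref{theo: optimal line mech} one instantiates $y = e^\epsilon$ and $z = \delta$ with $\epsilon > 0$, which indeed gives $y > 1$, so the lemma applies as invoked there.
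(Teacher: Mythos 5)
Your proposal is correct, and your second variant (verify the closed form by substitution/downward induction and invoke uniqueness of the solution of a first-order linear recurrence) is exactly the paper's argument; the unrolling derivation is just the same computation run forward. The algebra checks out, including the geometric-sum identity and the telescoping of the numerator to $1-y^{n-i}$.
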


\begin{proof}
Since \eqref{eq: lem: recur DP2 recursion} is a linear recursion, it has a unique solution which can be verified by substituting \eqref{eq: lem: recur DP2 solution} in \eqref{eq: lem: recur DP2 recursion}.
\end{proof}

Our final lemma is used in the proof of Corollary \ref{cor: half}.

\begin{lemma} \label{lem: ineq}
Let $x,y,z \in \mathbb{R}$ be such that $x \leq \frac{1}{2}$, $y \geq 1$, and $z \geq 0$. Then, it holds that
\begin{align} \label{eq: lem ineq}
    1-y+xy-z \leq \frac{x-z}{y} .
\end{align}
\end{lemma}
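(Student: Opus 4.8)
The plan is to clear the denominator and reduce \eqref{eq: lem ineq} to the nonpositivity of a single product of two factors whose signs are easy to read off. Since $y \geq 1 > 0$, multiplying \eqref{eq: lem ineq} through by $y$ shows it is equivalent to $y(1-y+xy-z) \leq x - z$, i.e.\ to $y(1-y+xy-z) - (x-z) \leq 0$. The first step is thus a routine expansion: collecting the terms involving $x$, those involving $z$, and the rest, one finds
\begin{align*}
 y(1-y+xy-z)-(x-z) &= x(y^2-1) - z(y-1) - y(y-1)\\
 &= (y-1)\bigl(x(y+1)-y-z\bigr),
\end{align*}
so the lemma is equivalent to $(y-1)\bigl(x(y+1)-y-z\bigr)\leq 0$.

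It then remains to check the sign of each factor. By hypothesis $y\geq 1$, so $y-1\geq 0$; in the degenerate case $y=1$ the product vanishes and the claim holds with equality, so assume $y>1$. It suffices to show the second factor is nonpositive, $x(y+1)\leq y+z$, and since $z\geq 0$ it is enough to prove the sharper bound $x(y+1)\leq y$, equivalently $x\leq \frac{y}{y+1}=1-\frac{1}{y+1}$. Because $y\geq 1$ forces $y+1\geq 2$, we get $\frac{y}{y+1}\geq \frac12\geq x$, which finishes the argument.

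I do not anticipate any real obstacle here: the only points requiring a word of care are that multiplying by $y$ preserves the inequality (legitimate since $y>0$) and that the factorization remains valid at $y=1$, where both sides of \eqref{eq: lem ineq} coincide. I note that this lemma is exactly the complement, in the regime $x\leq\frac12$, $z\geq 0$, of the characterization in Lemma \ref{lem: dp1>dp2}, so it could alternatively be obtained by negating that statement; but the direct factorization above is shorter and self-contained, and it is the form needed in the proof of Corollary \ref{cor: half}.
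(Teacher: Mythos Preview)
Your proof is correct and follows essentially the same route as the paper's: both multiply through by $y$, factor out $(y-1)$, and reduce to the elementary bound $x(y+1)\leq y$ (equivalently $xy+x\leq y$), which follows from $x\leq\tfrac12$ and $y\geq1$. Your presentation is slightly cleaner in that you factor directly as $(y-1)\bigl(x(y+1)-y-z\bigr)$ and absorb $z$ into the second factor, whereas the paper first establishes $(y-1)\bigl(y-\tfrac{x}{1-x}\bigr)\geq 0$ and then appends the $z$-inequality $z\leq zy$ as a separate step; but the underlying argument is the same.
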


\begin{proof}

Since $x\leq 1/2$ and $y \geq 1$, it follows that $xy + x \leq \frac{y+1}{2}$. But $y \geq 1$ implies in $\frac{y+1}{2} \leq y$. Thus, $xy + x \leq y$. We rewrite this as $0 \leq \left( y - \frac{x}{1-x} \right)$. Since $y \geq 1$ it follows that $0 \leq \left( y - \frac{x}{1-x} \right) \left( y-1 \right)$. Expanding this equation we obtain $y+xy^2 \leq x+y^2$. Now, since $y \geq 1$ and $z \geq 0$, it follows that $z \leq zy$. Thus, $y+xy^2 + z \leq x+y^2 +zy$. Rearranging this equation, we obtain \eqref{eq: lem ineq}.

\end{proof}

\bibliographystyle{IEEEtran}
\bibliography{ref.bib}

% Generated by IEEEtran.bst, version: 1.13 (2008/09/30)
\begin{thebibliography}{10}
\providecommand{\url}[1]{#1}
\csname url@samestyle\endcsname
\providecommand{\newblock}{\relax}
\providecommand{\bibinfo}[2]{#2}
\providecommand{\BIBentrySTDinterwordspacing}{\spaceskip=0pt\relax}
\providecommand{\BIBentryALTinterwordstretchfactor}{4}
\providecommand{\BIBentryALTinterwordspacing}{\spaceskip=\fontdimen2\font plus
\BIBentryALTinterwordstretchfactor\fontdimen3\font minus
  \fontdimen4\font\relax}
\providecommand{\BIBforeignlanguage}[2]{{%
\expandafter\ifx\csname l@#1\endcsname\relax
\typeout{** WARNING: IEEEtran.bst: No hyphenation pattern has been}%
\typeout{** loaded for the language `#1'. Using the pattern for}%
\typeout{** the default language instead.}%
\else
\language=\csname l@#1\endcsname
\fi
#2}}
\providecommand{\BIBdecl}{\relax}
\BIBdecl

\bibitem{dwork}
C.~Dwork, ``{Differential Privacy},'' in \emph{Automata ,Languages and
  Programming}, 2006.

\bibitem{dwork2014algorithmic}
C.~Dwork, A.~Roth \emph{et~al.}, ``The algorithmic foundations of differential
  privacy,'' \emph{Foundations and Trends{\textregistered} in Theoretical
  Computer Science}, vol.~9, no. 3--4, pp. 211--407, 2014.

\bibitem{uscensus}
\BIBentryALTinterwordspacing
\emph{Disclosure Avoidance and the 2020 Census}, 2020 (last accessed May 2020).
  [Online]. Available:
  \url{https://www.census.gov/about/policies/privacy/statistical_safeguards/disclosure-avoidance-2020-census.html}
\BIBentrySTDinterwordspacing

\bibitem{computationalDP}
I.~Mironov, O.~Pandey, O.~Reingold, and S.~V. S., ``Computational differential
  privacy,'' in \emph{Advances in Cryptology (CRYPTO)}, 2009, pp. 43--54.

\bibitem{LDP}
S.~Kasiviswanathan, H.~Lee, K.~Nissim, S.~Raskhodnikova, and A.~Smith, ``What
  can we learn privately?'' \emph{SIAM Journal on Computing}, vol.~40, no.~3,
  p. 793–826, 2011.

\bibitem{cuff:2016}
P.~Cuff and L.~Yu, ``Differential privacy as a mutual information constraint,''
  in \emph{CSS}, 2016, pp. 43--54.

\bibitem{MaxL2020}
I.~{Issa}, A.~B. {Wagner}, and S.~{Kamath}, ``An operational approach to
  information leakage,'' vol.~66, no.~3, pp. 1625--1657, Mar. 2020.

\bibitem{identifiability}
W.~{Wang}, L.~{Ying}, and J.~{Zhang}, ``On the relation between
  identifiability, differential privacy, and mutual-information privacy,''
  \emph{IEEE Transactions on Information Theory}, vol.~62, no.~9, pp.
  5018--5029, 2016.

\bibitem{Watchdog2019}
H.~{Hsu}, S.~{Asoodeh}, and F.~P. {Calmon}, ``Information-theoretic privacy
  watchdogs,'' Paris, France, 2019, pp. 552--556.

\bibitem{Sadeghi2020ITW}
\BIBentryALTinterwordspacing
P.~Sadeghi, N.~Ding, and T.~Rakotoarivelo, ``On properties and optimization of
  information-theoretic privacy watchdog,'' 2020. [Online]. Available:
  \url{https://arxiv.org/abs/2010.09367}
\BIBentrySTDinterwordspacing

\bibitem{PvsInfer2012}
F.~du~Pin~Calmon and N.~Fawaz, ``Privacy against statistical inference,''
  Monticello, IL, 2012, pp. 1401--1408.

\bibitem{holahan_LP_DP}
N.~{Holohan}, D.~J. {Leith}, and O.~{Mason}, ``Optimal differentially private
  mechanisms for randomised response,'' \emph{IEEE Transactions on Information
  Forensics and Security}, vol.~12, no.~11, pp. 2726--2735, 2017.

\bibitem{ghosh_utility}
A.~Ghosh, T.~Roughgarden, and M.~Sundararajan, ``Universally utility-maximizing
  privacy mechanisms,'' \emph{SIAM Journal on Computing}, vol.~41, no.~6, pp.
  1673--1693, 2012.

\bibitem{hamming_DP}
K.~{Kalantari}, L.~{Sankar}, and A.~D. {Sarwate}, ``Robust privacy-utility
  tradeoffs under differential privacy and hamming distortion,'' \emph{IEEE
  Transactions on Information Forensics and Security}, vol.~13, no.~11, pp.
  2816--2830, 2018.

\end{thebibliography}

\end{document}